\documentclass[aip,jmp,reprint,onecolumn,groupedaddress]{revtex4-1}

\usepackage{amssymb,amsmath,amsthm}

\usepackage[bookmarksnumbered,colorlinks,linkcolor=blue,citecolor=blue,urlcolor=blue]{hyperref}
\usepackage{hypcap}
\usepackage{amsthm}
\usepackage{enumerate}
\usepackage{colonequals}
\usepackage{overpic}

%%%%%%%%%%%%%%%%%%%%%%%%%%%%%%%%%%%%%%%%%%%%%%%%%%%%%%%%%%%%%%%%%%%%%%%%%%%%%%%%

\numberwithin{equation}{section}

\DeclareMathOperator{\adj}{adj}
\DeclareMathOperator{\spn}{span}
\DeclareMathOperator{\Res}{Res}

\renewcommand{\Im}{\mathop{\mathrm{Im}}}

\newcommand{\C}{\mathbb{C}}
\newcommand{\R}{\mathbb{R}}

\renewcommand{\sc}{{\rm sc}}
\newcommand{\Conf}{\mathcal{C}}

\newcommand{\defeq}{\colonequals}

\newtheorem{theorem}{Theorem}
\newtheorem{lemma}[theorem]{Lemma}
\newtheorem*{theorem*}{Theorem}

\newcommand{\eq}[1]{\hyperref[eq:#1]{(\ref*{eq:#1})}}
\renewcommand{\sec}[1]{\hyperref[sec:#1]{Section~\ref*{sec:#1}}}
\newcommand{\app}[1]{\hyperref[app:#1]{Appendix~\ref*{app:#1}}}
\newcommand{\fig}[1]{\hyperref[fig:#1]{Figure~\ref*{fig:#1}}}
\newcommand{\thm}[1]{\hyperref[thm:#1]{Theorem~\ref*{thm:#1}}}
\newcommand{\lem}[1]{\hyperref[lem:#1]{Lemma~\ref*{lem:#1}}}

%%%%%%%%%%%%%%%%%%%%%%%%%%%%%%%%%%%%%%%%%%%%%%%%%%%%%%%%%%%%%%%%%%%%%%%%%%%%%%%%
%%%%%%%%%%%%%%%%%%%%%%%%%%%%%%%%%%%%%%%%%%%%%%%%%%%%%%%%%%%%%%%%%%%%%%%%%%%%%%%%

\begin{document}

\title{Levinson's theorem for graphs II}

\author{Andrew M. Childs}

\author{David Gosset}

\address{
Department of Combinatorics \& Optimization and
Institute for Quantum Computing,
University of Waterloo,
200 University Avenue West, Waterloo, Ontario,
Canada}

\begin{abstract}
We prove Levinson's theorem for scattering on an $(m+n)$-vertex graph with $n$ semi-infinite paths each attached to a different vertex, generalizing a previous result for the case $n=1$.  This theorem counts the number of bound states in terms of the winding of the determinant of the S-matrix.  We also provide a proof that the bound states and incoming scattering states of the Hamiltonian together form a complete basis for the Hilbert space, generalizing another result for the case $n=1$.
\end{abstract}
\maketitle

%%%%%%%%%%%%%%%%%%%%%%%%%%%%%%%%%%%%%%%%%%%%%%%%%%%%%%%%%%%%%%%%%%%%%%%%%%%%%%%%
%%%%%%%%%%%%%%%%%%%%%%%%%%%%%%%%%%%%%%%%%%%%%%%%%%%%%%%%%%%%%%%%%%%%%%%%%%%%%%%%
\section{Introduction}

Continuous-time quantum walks on graphs were introduced by Farhi and Gutmann as a framework for developing new quantum algorithms.\cite{1998PhRvA..58..915F}  This idea was subsequently applied to give an example of exponential speedup by quantum walk \cite{2002quant.ph..9131C} and an optimal quantum algorithm for evaluating game trees.\cite{2007quant.ph..2144F}  Recently, it was shown that even a highly restricted model of continuous-time quantum walk is universal for quantum computation.\cite{2009PhRvL.102r0501C}

A key feature of the quantum walks considered in Refs.~\onlinecite{2009PhRvL.102r0501C,2007quant.ph..2144F,1998PhRvA..58..915F} is that the dynamics can be understood using scattering theory. For certain infinite graphs, one can construct an analog of standard quantum scattering theory, defining an S-matrix as well as eigenstates of the Hamiltonian corresponding to scattering of a wave packet at some momentum.\cite{2009PhRvL.102r0501C,1998PhRvA..58..915F}

Reference \onlinecite{2011JMP....52h2102C} proved a version of Levinson's theorem for continuous-time quantum walks on graphs obtained by attaching a semi-infinite path to a single vertex of a finite graph. As in Levinson's original work,\cite{1949PhRv...75.1445L} the theorem proven in Ref.~\onlinecite{2011JMP....52h2102C} gives a relation between the phase of the reflection coefficient and the number of bound states supported by the Hamiltonian. In this paper we prove a more general version of Levinson's theorem for graphs.

We begin by reviewing scattering theory and continuous-time quantum walks on graphs. At the end of this section we summarize our new results and their relationship to previous work.

%%%%%%%%%%%%%%%%%%%%%%%%%%%%%%%%%%%%%%%%%%%%%%%%%%%%%%%%%%%%%%%%%%%%%%%%%%%%%%%%
\subsection*{Quantum walk and scattering theory}

Quantum walk is a quantum mechanical analog of classical random walk. In this paper we consider continuous-time quantum walk,\cite{1998PhRvA..58..915F} in which time evolution of the quantum walker occurs via the Schr{\"o}dinger equation with a time-independent Hamiltonian. 

A simple example is the quantum walk on an infinite path. The Hilbert space is spanned by basis vectors 
\[
\{|x\rangle\colon x\in \mathbb{Z}\}
\]
and the Hamiltonian is
\begin{equation}
H=\sum_{x=-\infty}^{\infty}\left(|x\rangle\langle x+1|+|x+1\rangle\langle x|\right).\label{eq:line_walk}
\end{equation}
Starting in a particular vertex state $|y\rangle$ and evolving with the Hamiltonian $H$, the quantum state after time $t$ is 
\[
e^{-iHt}|y\rangle.
\]

To understand the dynamics, consider the analogy with the Schr{\"o}dinger equation for a free particle in one dimension. There the Hamiltonian is 
\[
H_{\text{free}}=\frac{p^{2}}{2m}
\]
where the momentum operator $p$ can be written in the position basis as $-i\hbar\frac{d}{dx}$. The Hamiltonian \eq{line_walk} is simply the finite difference approximation to $H_{\text{free}}$, up to an overall constant and a term proportional to the identity. As in the case of the free particle, the Hamiltonian is diagonalized in the basis of momentum eigenstates
\[
|\tilde k\rangle = \sum_{x=-\infty}^{\infty}e^{-ikx}|x\rangle,\quad k\in[-\pi,\pi].
\]
 These states satisfy 
\[
\langle \tilde k|\tilde p\rangle=2\pi\delta(k-p)
\]
where $\delta$ is the Dirac delta function. Such a state $|\tilde k\rangle$ is an eigenstate of the Hamiltonian with energy $E=2\cos(k)$. The interpretation of these states is analogous to the interpretation of the momentum states for the free particle in one dimension. We can imagine preparing a wave packet, that is to say, a superposition of momentum states with momenta close to some value $k$. For $k\in(-\pi,0)$ the wave packet moves to the left under the Hamiltonian evolution and for $k\in(0,\pi)$ it moves to the right. Indeed, the group velocity of such a wave packet is
\[
-\frac{dE}{dk}=2\sin(k).
\]
Note that, unlike the case of a free particle, the dispersion relation $E(k)=2\cos(k)$ has the property that its phase velocity $-{E}/{k}$ does not always have the same sign as its group velocity. However, it is the group velocity that determines the overall motion of a wave packet that is narrowly peaked in momentum space.

More interesting solutions to the one-dimensional Schr{\"o}dinger equation arise when a potential term $V(x)$ is added to the free Hamiltonian $H_{\text{free}}$. In the familiar case where $V(x)$ goes to zero sufficiently fast as $x\rightarrow\pm\infty$, the eigenstates of the Hamiltonian
\[
\frac{p^{2}}{2m}+V(x)
\]
are of two types: bound states and scattering states. The bound states are normalizable states with amplitudes that asymptotically go to zero as $x\rightarrow\pm\infty$, whereas the scattering states asymptotically approach momentum states in this limit.

One can define a quantum walk on any graph $G$. The Hilbert space has basis vectors labeled by the vertices of $G$. If $G$ is an undirected graph, we take the Hamiltonian $H_{G}$ to be the adjacency matrix of the graph. More generally, we can consider quantum walks on directed graphs where each edge $i\rightarrow j$ is weighted by a complex number $w_{ij}$, subject to the constraint that $w_{ji} = w_{ij}^*$ for all $i,j$. Then define the Hamiltonian to be the Hermitian operator
\begin{equation}
H_{G} \defeq \sum_{i,j} w_{ij} |j\rangle\langle i|.\label{eq:graph_ham}
\end{equation}

\begin{center}
\begin{figure}
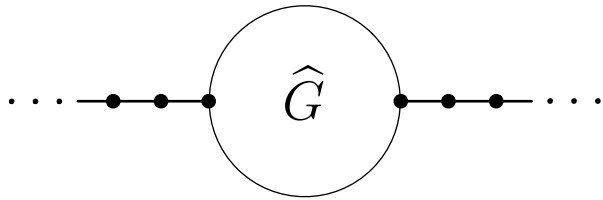

\capstart
\begin{overpic}{line}
\put(46.5,13.5){\huge$\widehat{G}$}
\end{overpic}
\caption{Quantum walk on an infinite path with an obstacle represented by the directed
graph $\widehat{G}$. The graph $\widehat{G}$ includes the two vertices that
are attached to the semi-infinite paths and may also include some
internal vertices (not pictured). }
\label{fig:line}
\end{figure}
\end{center}

To consider scattering on graphs, we construct a graph $G$ by attaching semi-infinite paths to some of the vertices of a finite graph $\widehat{G}$. The associated Hamiltonian $H_G$ (defined through equation \eq{graph_ham}) is equal to the adjacency matrix of the graph on the semi-infinite paths, and is equal to $H_{\hat{G}}$ within the original graph $\widehat{G}$. The case of two semi-infinite paths, depicted in \fig{line}, is closely analogous to the case of scattering off a (finite-range) one-dimensional potential. As in the one-dimensional Schr{\"o}dinger equation, eigenstates can be either scattering states or bound states. The scattering states are eigenstates of this Hamiltonian, but they can also be viewed as describing the dynamics of a wave packet that is prepared on one semi-infinite path and then allowed to evolve according to the Hamiltonian. Each scattering state has components that can be interpreted as an incident wave, a reflected wave, and a wave that is transmitted through the obstacle $\widehat{G}$.

In this paper we discuss quantum walks on graphs obtained by attaching $n$ semi-infinite paths to a graph $\widehat{G}$, as depicted in \fig{graph}. Now we can prepare an incoming wave packet on any of the $n$ semi-infinite paths and allow it to scatter off of the obstacle $\widehat{G}$. Associated with each incoming momentum $k\in(-\pi,0)$ and each semi-infinite path $j\in\{1,2,\ldots,n\}$, there is a scattering eigenstate $|\sc_{j}(k)\rangle$. The states $\{|\sc_{j}(k)\rangle\colon j\in\{1,2,\ldots,n\}\}$ can be compactly described by an $n\times n$ unitary matrix $S$ called the S-matrix.

%%%%%%%%%%%%%%%%%%%%%%%%%%%%%%%%%%%%%%%%%%%%%%%%%%%%%%%%%%%%%%%%%%%%%%%%%%%%%%%%
\subsection*{Outline of the paper}

In \sec{eigenstates} we describe the bound and scattering eigenstates of the Hamiltonian, as well as the S-matrix, following Refs.~\onlinecite{2009PhRvL.102r0501C,2009PhRvA..80e2330V}. \thm{completeness} of \sec{eigenstates} (proved in the Appendix) shows that the incoming scattering states along with the bound states form a complete basis for the Hilbert space. As in previous work,\cite{2011JMP....52h2102C,goldstone} our proofs rely on analytic continuation of the S-matrix. We define this analytic continuation in \sec{Analytic-Continuation-of}. In \sec{Levinson's-Theorem} we prove Levinson's theorem for graphs of the form shown in \fig{graph}. Our proof relies on a technical lemma that is proven in \sec{Proof-of-Lemma}.

%%%%%%%%%%%%%%%%%%%%%%%%%%%%%%%%%%%%%%%%%%%%%%%%%%%%%%%%%%%%%%%%%%%%%%%%%%%%%%%%
\subsection*{Relation to previous work}

Levinson's theorem relates the number of bound states of the Hamiltonian to the winding number of $\det(S(k))$ as $k$ is varied from $-\pi$ to $\pi$ (i.e., the number of times the phase of the determinant wraps around the interval $[0,2\pi)$ as $k$ varies). In the case $n=1$, our theorem is simpler and slightly stronger than the theorem proven in Ref.~\onlinecite{2011JMP....52h2102C}, removing a minor technical requirement on the graph $\hat{G}$. Our results can also be viewed as generalizing previous work which discusses discrete versions of Levinson's theorem on the half-line with a boundary condition at one end\cite{case:594,hinton:754} (rather than a general weighted finite graph $\widehat{G}$ as in Ref.~\onlinecite{2011JMP....52h2102C}).

\begin{figure}
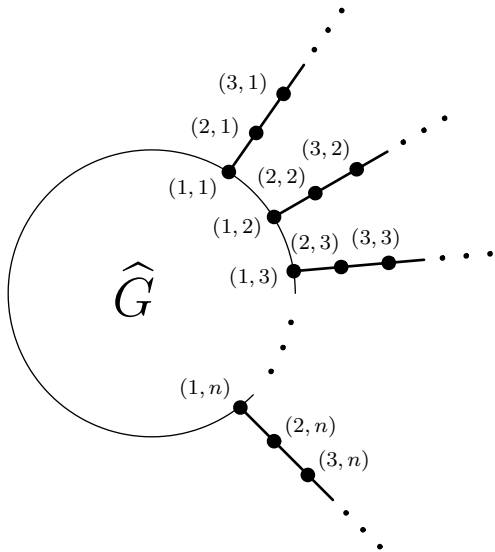

\capstart
\begin{centering}
\begin{overpic}{graph}
\put(20,43){\huge$\widehat{G}$}
\put(30,65){\footnotesize$(1,1)$}
\put(34,76){\footnotesize$(2,1)$}
\put(39,84){\footnotesize$(3,1)$}
\put(38,58){\footnotesize$(1,2)$}
\put(46,67){\footnotesize$(2,2)$}
\put(54,72){\footnotesize$(3,2)$}
\put(41,49){\footnotesize$(1,3)$}
\put(52,55){\footnotesize$(2,3)$}
\put(63,56){\footnotesize$(3,3)$}
\put(32,29){\footnotesize$(1,n)$}
\put(51,22){\footnotesize$(2,n)$}
\put(57,16){\footnotesize$(3,n)$}
\end{overpic}
\end{centering}
\caption{Graphs $G$ that we consider in this paper consist of a finite, weighted graph $\widehat{G}$ with $m+n$ vertices attached to $n$ semi-infinite paths. The $m$ internal vertices of $\widehat{G}$ are not pictured. We label the vertices on the semi-infinite paths $(x,j)$ where $x\in\{1,2,3,\ldots\}$ and $j\in\{1,2,\ldots,n\}.$}
\label{fig:graph}
\end{figure}

%%%%%%%%%%%%%%%%%%%%%%%%%%%%%%%%%%%%%%%%%%%%%%%%%%%%%%%%%%%%%%%%%%%%%%%%%%%%%%%%
%%%%%%%%%%%%%%%%%%%%%%%%%%%%%%%%%%%%%%%%%%%%%%%%%%%%%%%%%%%%%%%%%%%%%%%%%%%%%%%%
\section{Eigenstates of the Hamiltonian\label{sec:eigenstates}}

Consider the Hamiltonian describing a quantum walk on a graph $G$ as shown in \fig{graph}. Let $\widehat{G}$ have $m+n$ vertices, with $n$ vertices attached to semi-infinite paths and $m$ ``internal'' vertices. The Hamiltonian $H_{\widehat{G}}$ associated with the graph $\widehat{G}$ is defined as in equation \eq{graph_ham}. For convenience, we denote by $\widehat{H}$ the $(m+n)\times(m+n)$ matrix of $H_{\widehat{G}}$ in the basis of the $m+n$ vertices of $\widehat{G}$. (In general, we use a hat to denote restriction to the finite graph). We can write 
\[
\widehat{H}=\begin{pmatrix}
A & B^{\dagger}\\
B & D
\end{pmatrix}
\]
where the matrix $A$ is $n\times n$, $B$ is $m\times n$, and $D$ is $m\times m$. The full Hamiltonian $H$ associated with the graph $G$ is $H_{\widehat{G}}$ plus a term connecting adjacent vertices on the semi-infinite paths: 
\[
H=H_{\widehat{G}} + \sum_{j=1}^{n}\sum_{x=1}^{\infty}\left(|x,j\rangle\langle x+1,j|+|x+1,j\rangle\langle x,j|\right).
\]

%%%%%%%%%%%%%%%%%%%%%%%%%%%%%%%%%%%%%%%%%%%%%%%%%%%%%%%%%%%%%%%%%%%%%%%%%%%%%%%%
\subsection*{Bound states}

Bound states are the normalizable eigenstates of the Hamiltonian, and therefore have amplitudes on each semi-infinite path that go to zero as the distance along the line increases. Each bound state $|\phi\rangle$ has the form 
\begin{equation}
\langle x,j|\phi\rangle=\alpha_{j}z^{x-1}\label{eq:bound_states}
\end{equation}
 on the semi-infinite paths (for $j\in\{1,\ldots,n\}$ and $x\in\{1,2,3,\ldots\}$) for some $\alpha_{j}\in\C$ and $z\in(-1,1)\setminus\{0\}$ (the boundary cases where $z=\pm1$  will be discussed later). Any state of this form automatically satisfies 
\[
\langle x,j|H|\phi\rangle=\left(z+\frac{1}{z}\right)\langle x,j|\phi\rangle
\]
for $x\geq2$ and $j\in\{1,\ldots,n\}$. A bound state must also satisfy
the eigenvalue equation at the $m+n$ vertices of the graph. Let $\vec{\alpha}$
be a column vector with entries $\alpha_{j}$ for $j=1,\ldots,n$ and
let $\vec{\beta}$ be a column vector of the $m$ amplitudes of the
state $|\phi\rangle$ that are inside the graph and not on the semi-infinite
paths. Then the eigenvalue equation can be written as
\[
\begin{pmatrix}
A & B^{\dagger}\\
B & D
\end{pmatrix}\begin{pmatrix}
\vec{\alpha}\\
\vec{\beta}
\end{pmatrix}+z\begin{pmatrix}
\vec{\alpha}\\
0
\end{pmatrix}=\left(z+\frac{1}{z}\right)\begin{pmatrix}
\vec{\alpha}\\
\vec{\beta}
\end{pmatrix}.
\]
We write this in a more compact form using the operator 
\begin{equation}
\gamma(z) \defeq
\begin{pmatrix}
zA-1 & zB^{\dagger}\\
zB & zD-z^{2}-1
\end{pmatrix}.
\label{eq:gamma}
\end{equation}
Then the eigenvalue equation is
\[
\gamma(z)\begin{pmatrix}
\vec{\alpha}\\
\vec{\beta}
\end{pmatrix}=0.
\]
Given any $(m+n)$-dimensional vector $|\widehat{v}\rangle=\begin{pmatrix}
\vec{\alpha}\\
\vec{\beta}
\end{pmatrix}$ satisfying
\[
\gamma(z)|\widehat{v}\rangle=0
\]
for some $z\in(-1,1)$, there is an associated normalized bound state $|\phi_{v}\rangle$
defined through
\begin{align*}
\langle x,j|\phi_{v}\rangle & =N_{v} \alpha_{j}z^{x-1} \text{ for }x\in\{1,2,\ldots\}\text{ and }j\in\{1,2,\ldots,n\}\\
\langle w|\phi_{v}\rangle & =N_{v} \beta_{w} \text{ for internal vertices of the graph }w\in\{1,\ldots,m\}.
\end{align*}
 The normalizing constant $N_{v}$ is 
\begin{align}
N_{v} & = \left(\frac{\| \vec{\alpha}\| ^{2}}{1-z^{2}}+\| \vec{\beta}\| ^{2}\right)^{-\frac{1}{2}}.\label{eq:normalizing}
\end{align}

%%%%%%%%%%%%%%%%%%%%%%%%%%%%%%%%%%%%%%%%%%%%%%%%%%%%%%%%%%%%%%%%%%%%%%%%%%%%%%%%
\subsection*{Confined and unconfined bound states}

Some bound states may have zero amplitude on each of the $n$ semi-infinite paths, corresponding to $\vec{\alpha}=\vec{0}$ in the previous discussion. These have been called confined bound states\cite{2011JMP....52h2102C} or bound states of the second kind\cite{2009PhRvA..80e2330V}. A confined bound state $|\psi_{c}\rangle$ has nonzero amplitude only on the $m$ internal vertices of the graph that are not on the semi-infinite paths. Writing 
\[
|\widehat{\psi}_{c}\rangle=\begin{pmatrix}
0\\
\vec{\beta_{c}}
\end{pmatrix}
\]
for the restriction of $|\psi_{c}\rangle$ to the vertices of the graph, we see that this vector must satisfy the eigenvalue equation
\[
\begin{pmatrix}
A & B^{\dagger}\\
B & D
\end{pmatrix}|\widehat{\psi}_{c}\rangle=\lambda_{c}|\widehat{\psi}_{c}\rangle
\]
for some $\lambda_{c}\in\mathbb{R}$. This implies $D\vec{\beta}_{c}=\lambda_{c}\vec{\beta}_{c}$ and $B^{\dagger}\vec{\beta}_{c}=0$. Define the projector onto the semi-infinite paths
\[
P_{n} \defeq \sum_{x=1}^{\infty}\sum_{q=1}^{n}|x,q\rangle\langle x,q|
\]
and its restriction to the graph 
\[
\widehat{P_{n}} \defeq \begin{pmatrix}
1 & 0\\
0 & 0
\end{pmatrix}
\]
which projects onto the $n$ vertices connected to the semi-infinite paths. A confined bound state $|\psi_{c}\rangle$ satisfies $P_{n}|\psi_{c}\rangle=0$ and 
\begin{align}
\gamma(z)|\widehat{\psi}_{c}\rangle & = \left(z\lambda_{c}-z^{2}-1\right)|\widehat{\psi}_{c}\rangle.
\label{eq:gammacbs}
\end{align}
The roots of the polynomial $z\lambda_{c}-z^{2}-1$ are $z_{c}$ and
$z_{c}^{-1}$ where 
\[
z_{c}=\frac{1}{2}\left(\lambda_{c}-\sqrt{\lambda_{c}^{2}-4}\right).
\]
If $|\lambda_{c}|>2$ then both roots are real, with one inside the unit circle and one outside. The vector space spanned by the confined bound states with energies $|\lambda_{c}|>2$ is therefore 
\begin{equation}
\Conf_{>} \defeq \spn\left\{ |\psi\rangle\colon \text{$\langle\psi|P_{n}|\psi\rangle=0$ and  $\gamma(z)|\widehat{\psi}\rangle=0$ for some $z$ with $|z|<1$}\right\} .\label{eq:S1}
\end{equation}
If $|\lambda_{c}|<2$ then both roots are on the unit circle and are complex conjugates of one another. The vector space spanned by the confined bound states with energies $|\lambda_{c}|<2$ is 
\begin{equation}
\Conf_{<} \defeq \spn\left\{ |\psi\rangle\colon\text{$\langle\psi|P_{n}|\psi\rangle=0$ and $\gamma(z)|\widehat{\psi}\rangle=0$ for some $z$ with $|z|=1$ and $z\notin\{-1,1\}$}\right\} .\label{eq:S2}
\end{equation}
If $\lambda_{c} = \pm2$ then there is one repeated root equal
to $\pm1$. The corresponding vector space spanned by the confined bound states
is
\begin{equation}
\Conf_{=} \defeq \spn\left\{ \text{$|\psi\rangle\colon \langle\psi|P_{n}|\psi\rangle=0$ and $\gamma(z)|\widehat{\psi}\rangle=0$  for $z \in \{-1,1\}$}\right\} .\label{eq:S3}
\end{equation}

 In general, the Hamiltonian can have confined bound states as well as bound states with nonzero amplitudes on the semi-infinite paths. We can always form an orthonormal basis of the bound states of the Hamiltonian consisting of confined bound states in $\Conf \defeq \Conf_{<}\oplus \Conf_{=}\oplus \Conf_{>}$ and bound states in its orthogonal complement $\Conf^\perp$. We refer to bound states in $\Conf^\perp$ as ``unconfined bound states.''

%%%%%%%%%%%%%%%%%%%%%%%%%%%%%%%%%%%%%%%%%%%%%%%%%%%%%%%%%%%%%%%%%%%%%%%%%%%%%%%%
\subsection*{Half-bound states}

Half-bound states\cite{hinton:754} are unnormalizable states that are ``almost'' bound states. They are eigenstates of the Hamiltonian taking the form \eqref{eq:bound_states} on the semi-infinite lines with $z\in\{-1,1\}$ (where $\vec{\alpha}\neq\vec{0})$. 

%%%%%%%%%%%%%%%%%%%%%%%%%%%%%%%%%%%%%%%%%%%%%%%%%%%%%%%%%%%%%%%%%%%%%%%%%%%%%%%%
\subsection*{Scattering states and the S-matrix}

For each $k\in(-\pi,0)$ we define a set of $n$ incoming scattering states $\{|\sc_{j}(k)\rangle\colon j\in\{1,\ldots,n\}\}$ which have the form
\begin{align*}
\langle x,j'|\sc_{j}(k)\rangle & = e^{-ikx}\delta_{j'j}+e^{ikx}S_{j'j}(e^{ik}) \\
 &= \frac{1}{z^{x}}\delta_{j'j}+z^{x}S_{j'j}(z)
\end{align*}
on the semi-infinite lines, where $z=e^{ik}$. Such a state has energy $z+\frac{1}{z}=2\cos k.$ (Here $z$ is on the unit circle; later we discuss the analytic continuation of the S-matrix to other values of $z$.) The label $j$ indicates the semi-infinite line on which the state is incoming and $x=1,2,3,\ldots$ indexes the distance along this line (with $|1,j\rangle$ corresponding to the vertex where the $j$th line connects to the $(m+n)$-vertex graph).

We now write the eigenvalue equations that determine the amplitudes of $|\sc_{j}(k)\rangle$ at internal vertices of the graph. For each $j$, write these $m$ amplitudes as a column vector $\vec{\psi}_{j}(z)$, and collect these column vectors into a matrix 
\[
\Psi(z) \defeq \begin{pmatrix}
\vec{\psi}_{1}(z) & \vec{\psi}_{2}(z) & \cdots & \vec{\psi}_{n}(z)\end{pmatrix}.
\]
The following matrix equation determines $\Psi(z)$ as well as the scattering matrix $S(z)$: 
\begin{align*}
\begin{pmatrix}
A & B^{\dagger}\\
B & D
\end{pmatrix}\begin{pmatrix}
\left(z^{-1}+zS(z)\right)\\
\Psi(z)
\end{pmatrix}+\begin{pmatrix}
\left(z^{-2}+z^{2}S(z)\right)\\
0
\end{pmatrix}
=\left(z + \frac{1}{z}\right)\begin{pmatrix}
\left(z^{-1}+zS(z)\right)\\
\Psi(z)
\end{pmatrix}.
\end{align*}
The lower part of this equation says
\[
\Psi(z)=\frac{1}{\frac{1}{z}+z-D}\left(z^{-1}B+zBS(z)\right),
\]
which determines each $\vec{\psi}_{j}(z)$ in terms of the scattering matrix $S(z)$. The upper part determines the scattering matrix. We find
\[
A\left(z^{-1}+zS(z)\right)+B^{\dagger}\Psi(z)+\left(z^{-2}+z^{2}S(z)\right)=\left(z+\frac{1}{z}\right)\left(z^{-1}+zS(z)\right),
\]
which gives 
\begin{align}
S(z) & =-Q(z)^{-1}Q(z^{-1})\label{eq:S_matrix}
\end{align}
where
\begin{align*}
Q(z) &\defeq 1-z\left(A+B^{\dagger}\frac{1}{\frac{1}{z}+z-D}B\right).
\end{align*}

Recall that (for now) we are restricting our attention to values of $z$ on the unit circle. For such values, the S-matrix is unitary. To see this, note that $Q(z^{-1})=Q(z)^{\dagger}$ and $[Q(z),Q(z^{-1})]=0$, so
\begin{align}
S(z)^{\dagger} & =-Q(z^{-1})^{\dagger}\left(Q(z)^{-1}\right)^{\dagger} \nonumber\\
 & =-Q(z)Q(z^{-1})^{-1} \nonumber\\
 & =-Q(z^{-1})^{-1}Q(z) \nonumber\\
 & =S(z^{-1}). \label{eq:sdaggerinv}
\end{align}
We can see from equation \eq{S_matrix} that $S(z^{-1})=S(z)^{-1}$. This establishes unitarity of $S(z)$ on the unit circle.

%%%%%%%%%%%%%%%%%%%%%%%%%%%%%%%%%%%%%%%%%%%%%%%%%%%%%%%%%%%%%%%%%%%%%%%%%%%%%%%%

\subsection*{A complete basis of eigenstates}

The bound states and incoming scattering states form a complete basis for the entire infinite-dimensional Hilbert space. Let $\{|\psi_{c}\rangle\colon c=1,\ldots,n_{c}\}$ be an orthonormal basis of the confined bound state subspace $\Conf$ such that for each $c$, $P_{n}|\psi_{c}\rangle=0$ and 
\[
\begin{pmatrix}
A & B^{\dagger}\\
B & D
\end{pmatrix}|\widehat{\psi}_{c}\rangle=\lambda_{c}|\widehat{\psi}_{c}\rangle.
\]
Furthermore, let $\{|\phi_{b}\rangle\colon b=1,\ldots,n_{b}\} \subset \Conf^\perp$ be a basis for the unconfined bound state subspace. In \app{completeness} we prove the following theorem, generalizing the $n=1$ case:\cite{goldstone}
\begin{theorem}
\label{thm:completeness}Let $v$ and $w$ be any two
vertices of the graph $G$. Then 
\[
\langle v|\left(\int_{-\pi}^{0}\frac{dk}{2\pi}|\sc_{j}(k)\rangle\langle\sc_{j}(k)|+\sum_{b=1}^{n_{b}}|\phi_{b}\rangle\langle\phi_{b}|+\sum_{c=1}^{n_{c}}|\psi_{c}\rangle\langle\psi_{c}|\right)|w\rangle=\delta_{vw}.
\]
\end{theorem}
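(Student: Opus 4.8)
The plan is to verify the claimed operator identity one matrix element at a time, splitting into cases according to whether each of $v,w$ lies on a semi-infinite path or is one of the $m$ internal vertices of $\widehat{G}$, and to reduce every case to a contour integral on the unit circle that can be evaluated by residues using the analytic continuation of the S-matrix from \sec{Analytic-Continuation-of}. Throughout I read the repeated index $j$ in the scattering term as summed over $1,\dots,n$. Conceptually the whole computation is the statement that the spectral measure of $H$ is the discontinuity of the resolvent $(E-H)^{-1}$ across the continuous band $[-2,2]$ (the scattering states) plus the residues at its isolated poles (the bound states); but because the scattering states are given explicitly, it is cleanest to evaluate the scattering integral directly and show it supplies both the $\delta_{vw}$ and the pieces that cancel the bound-state projectors.

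First I would substitute $\langle x,j'|\sc_j(k)\rangle = z^{-x}\delta_{j'j}+z^{x}S_{j'j}(z)$ (with $z=e^{ik}$) into $\langle v|\sum_j\int_{-\pi}^{0}\tfrac{dk}{2\pi}|\sc_j(k)\rangle\langle\sc_j(k)||w\rangle$ for $v=(x,j')$ and $w=(y,j'')$ both on the paths. Expanding the product and using unitarity $S(z)S(z)^{\dagger}=1$ together with $S(z)^{\dagger}=S(z^{-1})$ on the unit circle (equation \eq{sdaggerinv}), the integrand collapses into a ``free'' piece $(z^{y-x}+z^{x-y})\delta_{j'j''}$ plus two residual terms $z^{x+y}S_{j'j''}(z)$ and $z^{-(x+y)}S_{j'j''}(z^{-1})$. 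The free piece integrates directly to $\delta_{xy}\delta_{j'j''}=\delta_{vw}$, so in this case the theorem is equivalent to showing that the residual $S$-dependent integral exactly cancels the bound-state contributions $\sum_b\langle v|\phi_b\rangle\langle\phi_b|w\rangle$; the confined states drop out here since they vanish on the paths.

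Second, I would turn the residual integral into a closed contour integral. Writing $\tfrac{dk}{2\pi}=\tfrac{dz}{2\pi i z}$ and folding the $z^{-(x+y)}S(z^{-1})$ term onto the complementary arc via $k\mapsto -k$, the two pieces combine into $\oint_{|z|=1}\tfrac{dz}{2\pi i}\,z^{x+y-1}S_{j'j''}(z)$, up to endpoint terms at the band edges $z=\pm1$. Since $x+y\ge 2$, the monomial is analytic in the open disk, so deforming the contour inward using the analytic continuation $S(z)=-Q(z)^{-1}Q(z^{-1})$ of \eq{S_matrix} collects the integral from the poles of $S(z)$ inside the disk, i.e.\ the zeros of $\det Q(z)$. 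The key lemma here is that each such pole sits at a bound-state value $z_b$ and that its residue reproduces precisely the negative of $\langle v|\phi_b\rangle\langle\phi_b|w\rangle$, with the normalization $N_v$ of \eq{normalizing}, so that the residual integral and the bound-state sum cancel. For the cases where $v$ or $w$ is internal I would run the same argument with the internal scattering amplitudes $\Psi(z)=(z^{-1}+z-D)^{-1}(z^{-1}B+zBS(z))$ and the internal bound-state amplitudes $\vec{\beta}$; there the confined states reenter through poles of $(z^{-1}+z-D)^{-1}$ (equivalently of $\gamma(z)^{-1}$, cf.\ \eq{gamma}) and must be shown to contribute the projectors onto $\Conf$.

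The hard part will be the analysis near the band edges $z=\pm1$ (energies $E=\pm2$), where the fold of the two arcs produces endpoint contributions and where the half-bound states live: I must show that these endpoint terms either vanish or combine so that half-bound states, which are unnormalizable and absent from the projector sum, do not spoil the identity, while genuine bound states with $z_b$ approaching the edge are still counted. A second delicate point is the embedded poles coming from the confined subspace $\Conf_<$, whose parameter $z$ lies on the unit circle (equation \eq{S2}) and hence on the integration contour itself; matching them to $\sum_c|\psi_c\rangle\langle\psi_c|$ requires a careful indentation/principal-value argument and uses the decoupling $B^{\dagger}\vec{\beta}_c=0$. Establishing that every zero of $\det Q(z)$ and every relevant pole of $\gamma(z)^{-1}$ on or inside the unit circle corresponds to an actual (confined or unconfined) bound state with matching residue, and that no spurious poles arise, is the technical heart of the argument.
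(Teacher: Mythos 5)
Your skeleton does match the paper's appendix proof: split into cases by vertex location, collapse the scattering integral using unitarity and $S(e^{ik})^\dagger = S(e^{-ik})$, rewrite the leftover term as $\oint_\Gamma \frac{dz}{2\pi i} z^{x+y-1} S_{qw}(z)$, and match residues inside the disk against bound-state projector terms. But the step you label ``the key lemma'' --- that each pole of $S(z)$ inside the disk sits at a bound state and has residue exactly $-\langle v|\phi_b\rangle\langle\phi_b|w\rangle$ --- \emph{is} the entire technical content of the theorem, and your proposal contains no idea for proving it. The paper's route requires four ingredients you do not have: (i) the representation $S_{qw}(z) = -z^{-2}\delta_{qw} + (1-z^{-2})\langle\widehat{1,q}|\gamma(z)^{-1}|\widehat{1,w}\rangle$ (equation \eq{S_mat_elements}), which locates the poles at zeros of $\det\gamma(z)$ --- note your identification of the poles with zeros of $\det Q(z)$ is not reliable, since $Q(z)$ and $Q(z^{-1})$ themselves have candidate poles wherever $z+\tfrac1z$ hits the spectrum of $D$, and these need not correspond to anything physical unless the eigenvector also lies in $\ker B^\dagger$; (ii) \lem{wzero}(a), showing all poles inside the disk are real; (iii) Kato's analytic perturbation theorem (\lem{diagonalize_gamma}), giving holomorphic eigenvalue branches $e_i(x)$ with the confined states split off; and (iv) the Hellmann--Feynman formula (\lem{deriv}), $\frac{de_i}{dx}\big|_{x_0} = (\tfrac1{x_0}-x_0) + x_0\langle\widehat{v_i}(x_0)|\widehat{P_n}|\widehat{v_i}(x_0)\rangle \ne 0$, which simultaneously shows the poles are simple and supplies precisely the denominator that combines with $(1-\tfrac1{x_0^2})$ and the normalization \eq{normalizing} to reproduce the bound-state projector. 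Without (iv) you cannot even assert simplicity of the poles, let alone evaluate the residues.

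Your plan for the ``hard parts'' is also misdirected in two ways. First, the fold $k\mapsto -k$ is an exact change of variables of a continuous integrand; there are no endpoint terms at $z=\pm1$, and the half-bound-state issue is resolved not by cancellation of boundary terms but by the factor $(1-\tfrac1{z^2})$ in the integrand killing the \emph{simple} zero of $e_i$ at $z=\pm1$ (simple again by \lem{deriv}), so no pole ever forms there. Second, no indentation or principal-value argument is needed for $\Conf_{<}$: on path--path matrix elements $S_{qw}(z)$ is regular on the whole circle by unitarity, and for internal vertices \lem{wzero}(b) shows any null vector of $\gamma(z)$ at non-real unimodular $z$ is confined, so after projecting onto $\Conf^{\perp}$ (the paper works in an $\widehat{H}$-eigenbasis containing the confined states) the integrand has no poles on the contour at all. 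The confined projector then enters the completeness relation purely algebraically --- when either state is confined, the resolvent term and the $\gamma(z)^{-1}$ term in the integrand cancel identically, and $\sum_c|\psi_c\rangle\langle\psi_c|$ supplies the missing $\delta_{ab}$ --- not through residues of $(z^{-1}+z-D)^{-1}$ as you propose; pursuing that route would force you to prove cancellation of the spurious poles coming from eigenvalues of $D$ whose eigenvectors are not in $\ker B^\dagger$, a complication the paper's $\gamma$-based organization is designed to avoid.
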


%%%%%%%%%%%%%%%%%%%%%%%%%%%%%%%%%%%%%%%%%%%%%%%%%%%%%%%%%%%%%%%%%%%%%%%%%%%%%%%%
%%%%%%%%%%%%%%%%%%%%%%%%%%%%%%%%%%%%%%%%%%%%%%%%%%%%%%%%%%%%%%%%%%%%%%%%%%%%%%%%
\section{Analytic continuation of the S-matrix\label{sec:Analytic-Continuation-of}}

We would like to analytically continue the scattering matrix $S(z)$ from the unit circle to the rest of the complex plane. To do this, we rewrite equation \eq{S_matrix} so that each matrix element of $S(e^{ik})$ is manifestly a rational function of $z=e^{ik}$.  Recall the definition of $\gamma(z)$ in equation \eq{gamma}. As can be verified by direct calculation, we have
\[
\gamma(z)=\begin{pmatrix}
1 & zB^{\dagger}\\
0 & zD-z^{2}-1
\end{pmatrix}\begin{pmatrix}
-Q(z) & 0\\
\frac{z}{zD-z^{2}-1}B & 1
\end{pmatrix}
\]
and 
\[
\gamma(z)^{-1}=\begin{pmatrix}
-Q(z)^{-1} & 0\\
\frac{z}{zD-z^{2}-1}BQ(z)^{-1} & 1
\end{pmatrix}\begin{pmatrix}
1 & -B^{\dagger}\left(D-z-\frac{1}{z}\right)^{-1}\\
0 & \frac{1}{z}\left(D-z-\frac{1}{z}\right)^{-1}
\end{pmatrix}.
\]
From these expressions we obtain 
\begin{equation}
-\gamma(z)^{-1}\gamma(\tfrac{1}{z})=\begin{pmatrix}
S(z) & 0\\
\frac{1}{z}\Psi(z) & -\frac{1}{z^{2}}
\end{pmatrix}.\label{eq:S_analytic_continuation}
\end{equation}

Now consider this equation for $z\in\C$ (no longer restricting to the unit circle). Each matrix element of $\gamma(z)$ is a polynomial in $z$. It is invertible everywhere in $\C$ except at a set of points determined by the roots of the polynomial $\det(\gamma(z)$). We can write its inverse as 
\[
\gamma(z)^{-1}=\frac{1}{\det(\gamma(z))}\text{adj}(\gamma(z))
\]
where $\adj(\gamma(z))$ is the adjugate matrix of $\gamma(z)$. The matrix elements of $\adj(\gamma(z))$ are polynomials in $z$. Hence the entries of $\gamma(z)^{-1}$ are rational functions of $z$, and so are the entries of $-\gamma(z)^{-1}\gamma(\tfrac{1}{z})$. We therefore define the analytic continuation $S(z)$ through equation \eq{S_analytic_continuation}, so $S(z)$ is the upper left $n\times n$ submatrix of $-\gamma(z)^{-1}\gamma(\tfrac{1}{z})$. 
\begin{lemma}
\label{lem:S_is_meromorphic}
Let $S(z)$ be defined through equation \eq{S_analytic_continuation}. Then each matrix element of $S(z)$ is a rational function of $z$. Furthermore, if $z_{0}$ is a pole of some matrix element of $S(z)$, then either $z_{0}=0$ or $\det(\gamma(z_{0}))=0$.
\end{lemma}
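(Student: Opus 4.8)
The plan is to read off both claims directly from the adjugate formula for $\gamma(z)^{-1}$ together with the explicit polynomial form of $\gamma(z)$ given in \eq{gamma}, while tracking where poles can appear. Since $S(z)$ is by definition the upper-left $n\times n$ block of $-\gamma(z)^{-1}\gamma(\tfrac{1}{z})$, it suffices to prove the two assertions for every entry of the full matrix $-\gamma(z)^{-1}\gamma(\tfrac{1}{z})$ and then restrict attention to that block.

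First I would record the polynomial structure of the two factors. Each entry of $\gamma(z)$ is a polynomial in $z$ of degree at most $2$, so $\det(\gamma(z))$ is a polynomial and $\adj(\gamma(z))$ is a matrix of polynomials; hence
\[
\gamma(z)^{-1}=\frac{1}{\det(\gamma(z))}\adj(\gamma(z))
\]
is a matrix of rational functions whose only possible poles lie at the roots of $\det(\gamma(z))$. Replacing $z$ by $1/z$ in \eq{gamma} shows that every entry of $\gamma(\tfrac{1}{z})$ is a Laurent polynomial in $z$, i.e., a rational function whose only possible pole is at $z=0$.

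Next I would combine these observations. Each entry of the product $-\gamma(z)^{-1}\gamma(\tfrac{1}{z})$ is a finite sum of products of entries of the two factors, hence a rational function of $z$; this establishes the first claim. For the second claim I would invoke the elementary fact that a sum or product of rational functions can only have poles at points that are poles of one of the summands or factors (cancellation may remove poles but never create new ones). Therefore every pole of an entry of $-\gamma(z)^{-1}\gamma(\tfrac{1}{z})$ lies in $\{0\}\cup\{z_{0}\colon\det(\gamma(z_{0}))=0\}$, and restricting to the upper-left block yields the assertion for $S(z)$.

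The only point requiring any care is the implicit hypothesis that $\det(\gamma(z))$ is not the zero polynomial, so that $\gamma(z)^{-1}$ is well defined as a matrix of rational functions; I would note that this holds because $\det(\gamma(z))$ is nonzero at all but finitely many points of the unit circle, where the unitary S-matrix was constructed. I do not expect a genuine obstacle here: the content of the lemma is entirely bookkeeping about rational functions, and the substantive input — the factorization of $\gamma(z)$ producing the block identity \eq{S_analytic_continuation} — has already been established before the statement.
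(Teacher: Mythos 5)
Your proof is correct and follows essentially the same route as the paper: both expand $\gamma(z)^{-1}$ via the adjugate formula and observe that $\gamma(\tfrac{1}{z})$ contributes at worst a pole at $z=0$ (the paper phrases this by noting that $z^{2}\gamma(\tfrac{1}{z})$ has polynomial entries, so every entry of $S(z)$ is a rational function with denominator $z^{2}\det(\gamma(z))$). The one point you flag---that $\det(\gamma(z))$ is not identically zero---is handled more directly by noting $\gamma(0)=-1$, so $\det(\gamma(0))=(-1)^{m+n}\neq 0$, rather than by appeal to the unit-circle construction.
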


\begin{proof}
It follows from the above discussion that each matrix element of $S(z)$ is a rational function of $z$. By equation \eq{gamma}, the entries of the matrix $z^{2}\gamma(\tfrac{1}{z})$ are polynomials in $z$. Hence each matrix element of $S(z)$ is a rational function of $z$ with denominator $z^{2}\det(\gamma(z))$.
\end{proof}

%%%%%%%%%%%%%%%%%%%%%%%%%%%%%%%%%%%%%%%%%%%%%%%%%%%%%%%%%%%%%%%%%%%%%%%%%%%%%%%%
%%%%%%%%%%%%%%%%%%%%%%%%%%%%%%%%%%%%%%%%%%%%%%%%%%%%%%%%%%%%%%%%%%%%%%%%%%%%%%%%
\section{Levinson's theorem\label{sec:Levinson's-Theorem}}

Levinson's theorem counts the number of bound states. Let us now make this more precise. Using expresions \eq{S1}, \eq{S2}, and \eq{S3}, we define the number of confined bound states as
\[
n_{c} \defeq \dim \Conf = \dim \Conf_{>}+\dim \Conf_{<}+\dim \Conf_{=}.
\]
Furthermore, define the number of unconfined bound states as
\begin{align*}
n_{b} & \defeq \sum_{\substack{x\in(-1,1) \\ \det\gamma(x)=0}} \dim\left(\spn\left\{ |\psi\rangle \colon |\psi\rangle\in \Conf_{>}^{\perp} \text{ and }\gamma(x)|\widehat{\psi}\rangle=0\right\} \right)
\end{align*}
and the number of (unconfined) half-bound states as
\begin{align*}
n_{h} & \defeq \sum_{x=\pm1}\dim\left(\spn\left\{ |\psi\rangle \colon |\psi\rangle\in \Conf_{=}^{\perp} \text{ and }\gamma(x)|\widehat{\psi}\rangle=0\right\} \right).
\end{align*}
For our purposes, the half-bound states are only counted as half a bound state each. In other words, we consider the ``number of bound states'' to be 
\[
n_{c}+n_{b}+\frac{1}{2}n_{h}.
\]

We now give another formula for the number of bound states. Define
\[
W(z) \defeq \det(\gamma(z)).
\]
Note that $W(z)$ is a polynomial in $z$. Denote the multiset of roots of $W$ by
\[
\{z_{1},\ldots,z_{k}\}
\]
where $k$ is the degree of $W$ (each root appears in the above list
a number of times equal to its multiplicity). Let
\begin{align*}
\alpha_{1} & \defeq \left|\left\{ i\colon\left|z_{i}\right|<1\right\} \right|\\
2\alpha_{2} & \defeq \left|\left\{ i\colon\left|z_{i}\right|=1 \text{ and }  z_{i}\notin\{-1,1\}\right\} \right|.\\
\alpha_{3} & \defeq \left|\left\{ i\colon z_{i}\in\left\{ -1,1\right\} \right\} \right|.
\end{align*}
The following lemma relates the number of bound states to $\alpha_{1}$, $\alpha_{2}$, and $\alpha_{3}$.

\begin{lemma}
\label{lem:1}
With the definitions given above,
\begin{align*}
\alpha_{1} & =n_{b}+\dim \Conf_{>}\\
\alpha_{2} & =\dim \Conf_{<}\\
\alpha_{3} & =n_{h}+2\dim \Conf_{=},
\end{align*}
so $\alpha_{1}+\alpha_{2}+\frac{1}{2}\alpha_{3}=n_{b}+n_{c}+\frac{1}{2}n_{h}.$
\end{lemma}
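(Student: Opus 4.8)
The plan is to match the roots of $W(z)=\det(\gamma(z))$, region by region in the complex plane, against the various bound states, using two main tools: the identity $\gamma(z)=z\widehat{H}-(z^2+1)+z^2\widehat{P_n}$ (obtained exactly as in the derivation of \eq{gamma}), which makes $\gamma(x)$ Hermitian for real $x$ since $\widehat{H}$ and $\widehat{P_n}$ are; and an orthogonal decomposition of the internal space that peels off the confined sector. The starting observation is that $z_0$ is a root of $W$ iff $\gamma(z_0)$ has a nontrivial kernel, and $\gamma(z)^\dagger=\gamma(\bar z)$.

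First I would establish semisimplicity at real roots. For a kernel vector $|\widehat{v}\rangle=\begin{pmatrix}\vec\alpha\\\vec\beta\end{pmatrix}$ at a real root $z_0$, a short computation using $\gamma(z_0)|\widehat{v}\rangle=0$ and $\langle\widehat{v}|\widehat{P_n}|\widehat{v}\rangle=\|\vec\alpha\|^2$ gives
\[
\langle\widehat{v}|\gamma'(z_0)|\widehat{v}\rangle=\Bigl(\tfrac{1}{z_0}-z_0\Bigr)\|\vec\beta\|^2+\tfrac{1}{z_0}\|\vec\alpha\|^2 .
\]
For $z_0\in(0,1)$ this quadratic form is positive definite on $\ker\gamma(z_0)$, for $z_0\in(-1,0)$ negative definite, and at $z_0=\pm1$ it reduces to $\pm\|\vec\alpha\|^2$, which is definite exactly on the half-bound ($\vec\alpha\neq0$) directions. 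By standard first-order perturbation theory for the analytic Hermitian family $\gamma(x)$, definiteness of $\gamma'(z_0)$ on $\ker\gamma(z_0)$ forces each eigenvalue branch through $0$ to cross transversally, so the order of vanishing of $W$ at such a $z_0$ equals $\dim\ker\gamma(z_0)$. I would also note that for $|z_0|<1$ non-real, a kernel vector with $\vec\alpha\neq0$ gives a normalizable eigenstate of $H$ with energy $z_0+\tfrac1{z_0}\notin\R$, while one with $\vec\alpha=0$ makes $\vec\beta$ an eigenvector of the Hermitian $D$ with non-real eigenvalue; both are impossible, so all roots with $|z_0|<1$ are real.

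Next I would isolate the confined sector. Let $\mathcal{U}\subseteq\C^m$ be the largest $D$-invariant subspace contained in $\ker B^\dagger$. Since $\mathcal{U}$ is $D$-invariant, $D$ is Hermitian, and $\mathcal{U}\perp\operatorname{im}B$, an orthonormal basis adapted to $\C^m=\mathcal{U}\oplus\mathcal{U}^\perp$ block-diagonalizes $\gamma(z)$, giving
\[
W(z)=\det\!\bigl(zD|_{\mathcal{U}}-(z^2+1)\bigr)\,\det\bigl(\gamma_{\mathrm{red}}(z)\bigr),
\]
where the first factor equals $(-1)^{\dim\mathcal{U}}\prod_j(z^2-\mu_j z+1)$ over the eigenvalues $\mu_j$ of $D|_{\mathcal{U}}$ and the reduced system on $\C^n\oplus\mathcal{U}^\perp$ has no confined states (any would lie in $\mathcal{U}\cap\mathcal{U}^\perp=\{0\}$). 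The confined factor is completely explicit: each $\mu_j$ gives a reciprocal pair of roots, contributing one root inside the disk when $|\mu_j|>2$, a conjugate pair on the circle when $|\mu_j|<2$, and a double root at $\pm1$ when $\mu_j=\pm2$. Matching against \eq{S1}, \eq{S2}, and \eq{S3} accounts for the terms $\dim\Conf_>$, $2\dim\Conf_<$, and $2\dim\Conf_=$.

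It then remains to count the roots of $\det\gamma_{\mathrm{red}}$. By the semisimplicity input, its roots inside the disk are real and semisimple, and—carrying no confined kernel vectors—supply exactly the unconfined bound states, contributing $n_b$ to $\alpha_1$; its roots at $\pm1$ are likewise semisimple with purely half-bound kernels, contributing $n_h$ to $\alpha_3$. The genuinely delicate point is to show that $\det\gamma_{\mathrm{red}}$ has \emph{no} roots on the unit circle away from $\pm1$: such a root would necessarily have a kernel vector with $\vec\alpha\neq0$, and via the explicit form \eq{S_analytic_continuation} of $-\gamma(z)^{-1}\gamma(\tfrac1z)$ this would force a pole of $S(z)$ on the unit circle, contradicting the unitarity established at \eq{sdaggerinv}. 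Assembling the three regions yields $\alpha_1=n_b+\dim\Conf_>$, $\alpha_2=\dim\Conf_<$, and $\alpha_3=n_h+2\dim\Conf_=$, and summing gives the stated identity. I expect the main obstacle to be precisely this last step: ruling out on-circle roots of the reduced system and correctly bookkeeping the pole–zero cancellations between $\det Q(z)$ and the resolvent $(\tfrac1z+z-D)^{-1}$ that occur at eigenvalues of $D|_{\mathcal{U}^\perp}$ in the band interior.
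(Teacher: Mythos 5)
Most of your proposal runs parallel to the paper's actual proof and is correct. Your quadratic-form identity $\langle\widehat{v}|\gamma'(z_0)|\widehat{v}\rangle=\bigl(\tfrac{1}{z_0}-z_0\bigr)\|\vec\beta\|^2+\tfrac{1}{z_0}\|\vec\alpha\|^2$ on $\ker\gamma(z_0)$ is exactly \lem{deriv}, and your transversality conclusion (order of vanishing of $W$ at a real root equals $\dim\ker\gamma$) is what the paper extracts from \thm{kato} via \lem{diagonalize_gamma}. Your splitting along the maximal $D$-invariant subspace $\mathcal{U}\subseteq\ker B^\dagger$ is a clean, Kato-free substitute for the paper's factorization $W(z)=\det(M(z))\prod_{c}\left(z\lambda_c-z^2-1\right)$, and your argument that a non-real root with $|z_0|<1$ would produce a normalizable eigenstate of $H$ (or an eigenvector of the Hermitian $D$) with non-real eigenvalue is a valid alternative to part (a) of \lem{wzero}. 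The counts for $\alpha_1$ and $\alpha_3$ go through on this basis.

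The step you flagged as delicate, however, is not merely delicate: as proposed, it fails. You want to rule out roots of $\det\gamma_{\mathrm{red}}$ on the unit circle away from $\pm1$ by claiming that an unconfined kernel vector at such a $z_0$ ``would force a pole of $S(z)$ on the unit circle,'' contradicting unitarity. But a root of $W$ on the circle never produces a pole of $S$ there, whether its kernel is confined or not. Since the coefficient matrices of $\gamma$ are Hermitian, $W(\bar z)=\overline{W(z)}$, so circle roots come in pairs $z_0$ and $\bar z_0=1/z_0$ with equal multiplicities; hence in $\det S(z)=(-1)^n z^{2m}\,W(1/z)/W(z)$ (equation \eq{det_S}) the zero of the denominator at $z_0$ is exactly cancelled by a zero of the numerator at $z_0$. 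The same cancellation is forced entrywise: at all but finitely many points of the circle $S$ is unitary, so each entry $S_{qw}$ is bounded by $1$ along the circle, and a rational function bounded along an arc through $z_0$ cannot have a pole at $z_0$. So the absence of poles of $S$ on the circle is unconditional and carries no information about $\ker\gamma(z_0)$; the implication you need (``unconfined kernel vector $\Rightarrow$ pole of $S$'') is unsubstantiated, and the determinant-level mechanism that would most naturally support it is precisely what the cancellation destroys. The missing ingredient is part (b) of \lem{wzero}, which is a one-line computation: a normalized $|\widehat\psi\rangle\in\ker\gamma(z_0)$ satisfies $\langle\widehat\psi|\bigl[\widehat{H}-(z_0+\tfrac{1}{z_0})+z_0\widehat{P_n}\bigr]|\widehat\psi\rangle=0$, and taking the imaginary part with $z_0=e^{i\phi}$, $\sin\phi\neq0$, gives $\langle\widehat\psi|\widehat{P_n}|\widehat\psi\rangle=0$, i.e.\ $\vec\alpha=0$ (this is just flux conservation for a stationary state). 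With that computation in place of your unitarity argument, the kernel of $\gamma_{\mathrm{red}}(z_0)$ is trivial for $|z_0|=1$, $z_0\neq\pm1$, your $\alpha_2$ count becomes $2\alpha_2=2\dim\Conf_<$, and the rest of your proof assembles correctly.
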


The proof of \lem{1} is given in \sec{Proof-of-Lemma}. 

We recall some useful facts from complex analysis. Given a closed, positively-oriented curve $\kappa$ in the complex plane and a complex function $f(z)$ that is meromorphic in $\C$ and has no zeros or poles on $\kappa$, we define the winding number $w_{\kappa}(f)$ of $f$ around $\kappa$ to be the number of times the image of $\kappa$ wraps around the origin. In other words, it is the number of times the complex phase of $f$ wraps around the interval $[0,2\pi)$. The argument principle is a formula relating $w_{\kappa}(f)$ to the number of zeros and poles of $f$ inside the contour. It says that
\[
w_{\kappa}(f)=Z_{\kappa}(f)-P_{\kappa}(f)
\]
where $Z_{\kappa}(f)$ is the number of zeros of $f$ inside the contour $\kappa$ and $P_{\kappa}(f)$ is the number of poles of $f$ inside $\kappa$ (both counted with multiplicity). We also use the notation $Z_{\kappa\setminus\{a\}}(f)$ (respectively, $P_{\kappa\setminus\{a\}}(f)$) to indicate the number of zeros (respectively, poles) of $f$ inside $\kappa$ but excluding the point $a$.

We now prove Levinson's theorem:
\begin{theorem*}
The winding number of the determinant of the S-matrix around
the unit circle $\Gamma$ is
\[
w_{\Gamma}(\det(S))=2\left(m-n_{b}-n_{c}-\frac{1}{2}n_{h}\right).
\]
\end{theorem*}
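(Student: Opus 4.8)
The plan is to reduce the winding number of $\det(S)$ to a count of the roots of $W(z)=\det(\gamma(z))$ according to their position relative to the unit circle, and then invoke \lem{1}. The starting point is equation \eq{S_analytic_continuation}: the matrix $-\gamma(z)^{-1}\gamma(\tfrac1z)$ is block lower triangular, with diagonal blocks $S(z)$ and $-\tfrac{1}{z^{2}}$ times the $m\times m$ identity, so taking determinants of both sides of \eq{S_analytic_continuation} gives $\det(S(z))\,(-1)^{m}z^{-2m}=(-1)^{m+n}W(\tfrac1z)/W(z)$, that is,
\[
\det(S(z))=(-1)^{n}z^{2m}\,\frac{W(1/z)}{W(z)}.
\]
First I would record two structural facts about $W$. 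Since $A$ and $D$ are Hermitian, direct inspection of \eq{gamma} shows $\gamma(\bar z)=\gamma(z)^{\dagger}$, hence $W(\bar z)=\overline{W(z)}$; thus $W$ has real coefficients and its roots are closed under complex conjugation. Moreover $\gamma(0)=-1$ (minus the identity), so $W(0)\neq 0$ and every root $z_{i}$ is nonzero.

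Next I would make the right-hand side completely explicit. Writing $W(z)=c\prod_{i=1}^{k}(z-z_{i})$, a short computation gives $W(1/z)=c\,(-1)^{k}\big(\prod_{i}z_{i}\big)z^{-k}\prod_{i}(z-1/z_{i})$, so that
\[
\det(S(z))=(-1)^{n+k}\Big(\prod_{i=1}^{k}z_{i}\Big)\,z^{2m-k}\,\frac{\prod_{i=1}^{k}(z-1/z_{i})}{\prod_{i=1}^{k}(z-z_{i})}.
\]
The winding number around the positively-oriented circle $\Gamma$ is additive over factors, with $w_{\Gamma}(z-a)=1$ if $|a|<1$ and $0$ if $|a|>1$, and $w_{\Gamma}(z^{p})=p$. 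The key observation, which resolves the fact that $W$ may vanish on $\Gamma$, is that the factors coming from roots on the unit circle cancel between numerator and denominator: if $|z_{i}|=1$ then $1/z_{i}=\bar z_{i}$, which is again a root of $W$ of the same multiplicity (using real coefficients), so the multisets $\{1/z_{i}\}$ and $\{z_{i}\}$ agree when restricted to $\Gamma$ (the roots at $z=\pm1$ being self-paired). After this cancellation $\det(S)$ has no zeros or poles on $\Gamma$, so $w_{\Gamma}(\det(S))$ is well defined, and only the $\alpha_{1}$ roots with $|z_{i}|<1$ and the $\beta\defeq k-\alpha_{1}-2\alpha_{2}-\alpha_{3}$ roots with $|z_{i}|>1$ can contribute.

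Collecting contributions, the factor $z^{2m-k}$ gives $2m-k$; each root with $|z_{i}|<1$ contributes $-1$ through the denominator factor $(z-z_{i})$ (its partner $(z-1/z_{i})$ having its zero outside $\Gamma$); and each root with $|z_{i}|>1$ contributes $+1$ through the numerator factor $(z-1/z_{i})$, whose zero $1/z_{i}$ lies inside $\Gamma$. Hence
\[
w_{\Gamma}(\det(S))=(2m-k)-\alpha_{1}+\beta=2m-2\alpha_{1}-2\alpha_{2}-\alpha_{3}=2\Big(m-\alpha_{1}-\alpha_{2}-\tfrac12\alpha_{3}\Big),
\]
and substituting $\alpha_{1}+\alpha_{2}+\tfrac12\alpha_{3}=n_{b}+n_{c}+\tfrac12 n_{h}$ from \lem{1} yields the claimed value. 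The only genuine obstacle is the treatment of the roots of $W$ on $\Gamma$, which correspond to the confined states in $\Conf_{<}$ and the half-bound states at $z=\pm1$; the conjugation symmetry of $W$ is precisely what makes those factors cancel, so that the count goes through on $\Gamma$ itself without having to indent the contour.
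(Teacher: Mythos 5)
Your proof is correct and takes essentially the same route as the paper: the same identity $\det(S(z))=(-1)^n z^{2m}\,W(1/z)/W(z)$ derived from \eq{S_analytic_continuation}, the same cancellation of unit-circle roots of $W$ via its conjugation symmetry, and the same final appeal to \lem{1}. The only difference is cosmetic: you compute the winding factor-by-factor, which automatically absorbs any pairs of roots $z_j$, $1/z_j$ lying off the unit circle, whereas the paper cancels those $q$ pairs explicitly and then applies the argument principle—both bookkeeping schemes yield $2m-2\alpha_1-2\alpha_2-\alpha_3$.
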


This generalizes the main result of Ref.~\onlinecite{2011JMP....52h2102C}, which establishes the $n=1$ case.  Note that the approach of Ref.~\onlinecite{2011JMP....52h2102C} has a technical requirement on $\widehat{G}$, namely that $A\neq0$ or $B^{\dagger}B\neq1$ (in this case $A$ is $1\times1$ and $B$ is $m \times 1$). Our theorem has no such technical requirement, regardless of the value of $n$.

Also note that the form of Levinson's theorem depends on the conventions outlined in \sec{eigenstates} for the definition of the S-matrix. A different convention for the relative phases of the scattering states would modify the statement of the theorem.

\begin{proof}
Using equation \eq{S_analytic_continuation}, we find
\begin{align}
\det(S(z)) 
 &= (-1)^n z^{2m}\frac{W(\frac{1}{z})}{W(z)}\nonumber \\
 &= (-1)^n z^{2m}\left(\prod_{i=1}^{k}\frac{\frac{1}{z}-z_{i}}{z-z_{i}}\right)\nonumber \\
 &= (-1)^n z^{2m}\left(\prod_{i=1}^{k}\frac{z_{i}}{z}\right)\prod_{j=1}^{k}\left(\frac{\frac{1}{z_{j}}-z}{z-z_{j}}\right)\nonumber \\
 & =(-1)^n \left(\prod_{i=1}^{k}z_{i}\right)z^{2m-k}\frac{\prod_{j=1}^{k}\left(\frac{1}{z_{j}}-z\right)}{\prod_{j=1}^{k}\left(z-z_{j}\right)}.\label{eq:det_S}
\end{align}
(Note that $|z_{i}|>0$ for all $i$ because $\gamma(0)=-1$, so $W(0)=\left(-1\right)^{m+n}\neq0$.)

Although $W$ may have roots on the unit circle, $\det(S(z))$ does not have any zeros or poles on the unit circle. Indeed, $|{\det(S)}|=1$ on the unit circle since $S$ is unitary there. We can also see this explicitly from equation \eq{det_S}. Note that each root $z_{j}=\pm1$ is a root of both $W(z)$ and $W(\frac{1}{z})$ and the $\alpha_{3}$ factors corresponding to these roots cancel in the ratio
\begin{equation}
\frac{\prod_{j=1}^{k}\left(\frac{1}{z_{j}}-z\right)}{\prod_{j=1}^{k}\left(z-z_{j}\right)}\label{eq:ratio}
\end{equation}
appearing in the expression above. Similarly, for each root $z_{j}$ such that $|z_{j}|=1$ and $z_{j}\notin\{-1,1\}$, there is another root
$z_j^*=\frac{1}{z_{j}}$.
(For $z\in\R$, $\gamma(z)$ is Hermitian and $W(z)\in\R$, so any roots of $W$ with nonzero imaginary part must occur in complex conjugate pairs.)
These $2\alpha_{2}$ roots also cancel in the expression for $\det(S(z))$.

Similarly, if there are other roots $z_{j}$ with $0<|z_{j}|<1$ such that $\frac{1}{z_{j}}$ is also a root of $W(z)$ then the corresponding factors in the numerator and denominator of equation \eq{det_S} cancel each other. Let $q$ be the number of such roots $z_{j}$ (so that the total number of canceling factors in the ratio \eq{ratio} is $2q$). Then
\begin{align*}
P_{\Gamma\setminus\{0\}}(\det S) & =\alpha_{1}-q\\
Z_{\Gamma\setminus\{0\}}(\det S) & =\left(\text{degree of numerator}\right)-\left(\text{\# of zeros of numerator outside the unit circle}\right)\\
 & =\left(k-2q-2\alpha_{2}-\alpha_{3}\right)-\left(\alpha_{1}-q\right)\\
 & =k-q-\alpha_{1}-2\alpha_{2}-\alpha_{3},
\end{align*}
where ``numerator'' and ``denominator'' refer to the ratio in equation \eq{ratio} after common factors have been canceled.

The determinant of $S$ may also have a zero or a pole at $z=0$. From equation \eqref{eq:det_S} we see that the lowest-order term in the Laurent expansion about $z=0$ is $2m-k$, corresponding to a zero if this quantity is positive or a pole if it is negative.

Finally, the argument principle shows that
\begin{align*}
w_{\Gamma}\left(\det S\right) & =Z_{\Gamma\setminus\{0\}}\left(\det S\right)-P_{\Gamma\setminus\{0\}}\left(\det S\right)+2m-k,\\
 & =\left(k-q-\alpha_{1}-2\alpha_{2}-\alpha_{3}\right)-\left(\alpha_{1}-q\right)+2m-k\\
 &= 2m-2\alpha_1-2\alpha_2-\alpha_3 \\
 & =2\left(m-n_{b}-n_{c}-\frac{1}{2}n_{h}\right),
\end{align*}
where in the last line we have used \lem{1}.
\end{proof}

%%%%%%%%%%%%%%%%%%%%%%%%%%%%%%%%%%%%%%%%%%%%%%%%%%%%%%%%%%%%%%%%%%%%%%%%%%%%%%%%
%%%%%%%%%%%%%%%%%%%%%%%%%%%%%%%%%%%%%%%%%%%%%%%%%%%%%%%%%%%%%%%%%%%%%%%%%%%%%%%%
\section{Proof of Lemma 3 \label{sec:Proof-of-Lemma}} 

The operator 
\begin{equation}
\gamma(z)=\begin{pmatrix}
zA-1 & zB^{\dagger}\\
zB & zD-z^{2}-1
\end{pmatrix}
=z^2(\widehat{P_n}-1) + z\widehat{H} - 1
\label{eq:gammahp}
\end{equation}
has appeared in our discussions of the scattering and bound states of the Hamiltonian. In this section we first establish some technical properties of this operator that we use in the proof of \lem{1}. We use the following result of Kato (Theorem 6.1 and Section 6.2 of Ref.~\onlinecite{Kato:1966:PTL}).

\begin{theorem}
\label{thm:kato} Suppose $T_{0},T_{1},T_{2}$ are $N\times N$ Hermitian matrices and consider 
\[
T(x)=T_{0}+xT_{1}+x^{2}T_{2}
\]
as a function of the complex variable $x$. For each real $x$ there exists an orthonormal set of eigenvectors $\{|w_{i}(x)\rangle\colon i=1,2,\ldots,N\}$ of $T(x)$ which can be chosen to be holomorphic functions of $x$ on the real axis.
\end{theorem}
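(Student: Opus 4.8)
The plan is to prove this by analytic perturbation theory, exploiting two features of the family: that $T(x)$ is an \emph{entire} matrix-valued function (being a polynomial in $x$), and that it is Hermitian for real $x$. The statement follows once we establish that the eigenvalues extend to single-valued holomorphic functions in a complex neighborhood of the real axis, that the associated eigenprojections do likewise, and that these projections can be trivialized analytically to produce an orthonormal frame of eigenvectors.

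First I would study the eigenvalues, which are the roots of $\det(T(x)-\lambda I)$, a polynomial in $\lambda$ whose coefficients are polynomials in $x$. Hence the eigenvalues are algebraic functions of $x$, admitting Puiseux expansions about any point $x_{0}\in\R$ and holomorphic except at a discrete set of exceptional points where multiplicities change. Away from these points the eigenvalues are distinct and locally holomorphic, and each eigenprojection is given by the resolvent integral
\[
P_{i}(x) = -\frac{1}{2\pi i}\oint_{\Gamma_{i}}\bigl(T(x)-\zeta\bigr)^{-1}\,d\zeta,
\]
with $\Gamma_{i}$ a small positively oriented loop enclosing only the $i$-th eigenvalue; this expression is manifestly holomorphic wherever the loop separates the spectrum.

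The crux — and the step I expect to be the main obstacle — is to rule out genuine singularities at the exceptional points on the real axis, so that both eigenvalues and eigenprojections are in fact holomorphic there. Here Hermiticity is essential: for a general analytic family such as $\begin{pmatrix} 0 & 1 \\ x & 0 \end{pmatrix}$ square-root branch points genuinely occur. The argument (due to Rellich) is that for real $x$ the matrix $T(x)$ is Hermitian, so all its eigenvalues are real; a nontrivial Puiseux cycle $\lambda(x)=\sum_{j}c_{j}(x-x_{0})^{j/p}$ of period $p>1$ would be forced to take non-real values for real $x$ near $x_{0}$, a contradiction. This shows every eigenvalue branch is single-valued and holomorphic across the real axis. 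The same realness, together with the boundedness $\lVert P_{i}(x)\rVert=1$ of orthogonal projections (which excludes poles), removes the apparent singularities of the eigenprojections, so — with the eigenvalues relabeled by analytic branch rather than by magnitude — each $P_{i}(x)$ is holomorphic through the crossings as well.

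Finally I would convert the holomorphic eigenprojections into an orthonormal analytic eigenbasis via Kato's transformation function. Fixing a base point $x_{0}$ and solving the linear initial value problem
\[
\frac{dU}{dx} = \bigl[P_{i}'(x),\,P_{i}(x)\bigr]\,U(x), \qquad U(x_{0}) = I,
\]
produces a holomorphic family $U(x)$ that is unitary for real $x$ (the commutator is anti-Hermitian there) and satisfies $U(x)P_{i}(x_{0})U(x)^{-1}=P_{i}(x)$, so it carries eigenspaces isometrically onto eigenspaces. Applying $U(x)$ to a fixed orthonormal basis of the range of $P_{i}(x_{0})$, and patching these local constructions along the real axis, yields the desired orthonormal set $\{|w_{i}(x)\rangle\}$ of eigenvectors depending holomorphically on $x$; for eigenvalues of constant multiplicity the identical construction applies with rank-$r$ projections.
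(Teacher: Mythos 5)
This theorem is stated in the paper without proof: it is imported verbatim as a known result, with a citation to Kato's book (Theorem 6.1 and Section 6.2 of the cited reference), so there is no in-paper argument to compare yours against. Your proposal is essentially the standard Rellich--Kato proof that underlies that citation, and it is sound: algebraicity of the eigenvalues gives Puiseux expansions, Hermiticity for real $x$ rules out nontrivial cycles, single-valuedness of the branches together with the bound $\|P_i(x)\|=1$ on the real axis makes the singularities of the eigenprojections removable, and the commutator ODE (Kato's transformation function) converts holomorphic projections into a holomorphic orthonormal frame. Two details are worth tightening if you write this out in full. First, in the branch-point argument realness must be invoked on \emph{both} sides of $x_0$: a cycle such as $\pm\sqrt{x-x_0}$ takes real values for all real $x>x_0$, and the contradiction with Hermiticity only appears for $x<x_0$. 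Second, the ODE $U'=[P_i',P_i]\,U$ intertwines only the single projection $P_i$; to obtain one unitary intertwining the whole resolution of the identity at once, take the generator $Q(x)=\sum_j P_j'(x)P_j(x)$ (anti-Hermitian for real $x$ because $\sum_j P_j=1$), or else keep your per-branch unitaries and note that vectors attached to distinct analytic branches are automatically orthogonal wherever the branches take distinct values, hence everywhere by continuity.
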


We now use this theorem to establish that the eigenvalues and eigenvectors of $\gamma(x)$ are smooth functions of $x$ for $x\in\R$. In fact, we show that one can choose a smooth basis for the eigenvectors that includes the confined bound states as basis vectors. In the proof of \lem{1} we use this fact to write $\det(S(x))$ (for $x\in\R$) as a product of two terms: one term that incorporates the contribution of the confined bound states and another term that comes from the unconfined bound states.

\begin{lemma}
\label{lem:diagonalize_gamma}Let $\{|\psi_{c}\rangle\}$ be an orthonormal basis of confined bound states as described in \thm{completeness}. For $x\in\R$, $\gamma(x)$ is Hermitian and there is an orthonormal basis $\{|\widehat{v_{i}}(x)\rangle\colon i=1,\ldots,m+n-n_{c}\}$ and eigenvalues $\{e_{i}(x)\colon i=1,\ldots,m+n-n_{c}\}$ that are holomorphic functions of $x$ on the real axis, such that 
\[
\gamma(x)
=\sum_{c=1}^{n_{c}}|\widehat{\psi}_{c}\rangle\langle\widehat{\psi}_{c}|\left(x\lambda_{c}-x^{2}-1\right)
+\sum_{i=1}^{m+n-n_{c}}e_{i}(x)|\widehat{v_{i}}(x)\rangle\langle\widehat{v_{i}}(x)| \quad\text{for $x\in\R$}.
\]
\end{lemma}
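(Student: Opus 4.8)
The plan is to invoke \thm{kato} directly, but only after peeling off the confined bound states, which span an $x$-independent invariant subspace on which $\gamma(x)$ is already diagonal with the explicit eigenvalue $x\lambda_c - x^2 - 1$. The starting observation is that equation \eq{gammahp} exhibits $\gamma(x) = T_0 + xT_1 + x^2 T_2$ with the three Hermitian matrices $T_0 = -1$, $T_1 = \widehat{H}$, and $T_2 = \widehat{P_n} - 1$, so that $\gamma(x)$ is Hermitian for real $x$ and has exactly the quadratic form required by \thm{kato}.

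Next I would record that each confined basis vector is an eigenvector of $\gamma(x)$ for every $x$. Since $\{|\psi_c\rangle\}$ is chosen as in \thm{completeness}, we have $\widehat{H}|\widehat{\psi}_c\rangle = \lambda_c|\widehat{\psi}_c\rangle$ and $\widehat{P_n}|\widehat{\psi}_c\rangle = 0$; substituting into \eq{gammahp} reproduces \eq{gammacbs}, namely $\gamma(x)|\widehat{\psi}_c\rangle = (x\lambda_c - x^2 - 1)|\widehat{\psi}_c\rangle$. Let $\mathcal{K} \defeq \spn\{|\widehat{\psi}_c\rangle\}$, an $n_c$-dimensional subspace. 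The key structural point is that $\mathcal{K}$ is invariant not merely under $\gamma(x)$ but under each coefficient $T_0, T_1, T_2$ separately: $T_0$ acts as the scalar $-1$ and $T_2$ acts on $\mathcal{K}$ as the scalar $-1$ as well, while $T_1 = \widehat{H}$ preserves $\mathcal{K}$ because the $|\widehat{\psi}_c\rangle$ are eigenvectors of $\widehat{H}$. Because each $T_i$ is Hermitian, the orthogonal complement $\mathcal{K}^\perp$ (of dimension $m+n-n_c$) is invariant under each $T_i$ as well.

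I would then apply \thm{kato} to the restriction. The operators $T_i|_{\mathcal{K}^\perp}$ are Hermitian on the $(m+n-n_c)$-dimensional space $\mathcal{K}^\perp$, and $\gamma(x)|_{\mathcal{K}^\perp} = T_0|_{\mathcal{K}^\perp} + x\,T_1|_{\mathcal{K}^\perp} + x^2\,T_2|_{\mathcal{K}^\perp}$ is again of the admissible quadratic form, so \thm{kato} supplies an orthonormal family $\{|\widehat{v_i}(x)\rangle\}\subset\mathcal{K}^\perp$ and eigenvalues $\{e_i(x)\}$, all holomorphic on the real axis, diagonalizing this restriction. Assembling the pieces, for each real $x$ the set $\{|\widehat{\psi}_c\rangle\}\cup\{|\widehat{v_i}(x)\rangle\}$ is an orthonormal basis of $\C^{m+n}$ (orthogonality across the two families is automatic since one lies in $\mathcal{K}$ and the other in $\mathcal{K}^\perp$), and summing the two spectral contributions yields exactly the claimed decomposition.

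The bulk of the argument is bookkeeping; the one step that genuinely needs care is the passage from invariance of $\mathcal{K}$ to applicability of \thm{kato} on $\mathcal{K}^\perp$. I expect this to be the main point to get right: Kato's theorem applies to a quadratic matrix pencil with Hermitian coefficients, so it is not enough that $\gamma(x)$ itself preserves $\mathcal{K}^\perp$ for each fixed $x$ — one must verify that each of the coefficients $T_0, T_1, T_2$ individually preserves $\mathcal{K}^\perp$, so that the restricted pencil retains the form $T_0' + xT_1' + x^2 T_2'$. This follows from the explicit action of $\widehat{H}$ and $\widehat{P_n}$ on the confined states (equivalently, from matching powers of $x$), and once it is in hand the restriction inherits the exact hypotheses of \thm{kato}.
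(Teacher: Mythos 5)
Your proof is correct and follows essentially the same route as the paper: the paper also peels off the confined bound states by writing $\gamma(z)=\sum_{c}|\widehat{\psi}_{c}\rangle\langle\widehat{\psi}_{c}|\left(z\lambda_{c}-z^{2}-1\right)+\left(1-\sum_{c}|\widehat{\psi}_{c}\rangle\langle\widehat{\psi}_{c}|\right)\gamma(z)\left(1-\sum_{c}|\widehat{\psi}_{c}\rangle\langle\widehat{\psi}_{c}|\right)$, observes that the projected term is again a quadratic pencil $M_0+zM_1+z^2M_2$ with Hermitian coefficients, and applies \thm{kato}. Your invariant-subspace bookkeeping (checking that $T_0,T_1,T_2$ each preserve $\mathcal{K}$, hence $\mathcal{K}^\perp$) is just a more explicit justification of the step the paper states in one line.
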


\begin{proof}
Write 
\[
\gamma(z)
=\sum_{c=1}^{n_{c}}|\widehat{\psi}_{c}\rangle\langle\widehat{\psi}_{c}|\left(z\lambda_{c}-z^{2}-1\right)
+\left(1-\sum_{c=1}^{n_{c}}|\widehat{\psi}_{c}\rangle\langle\widehat{\psi}_{c}|\right)\gamma(z)\left(1-\sum_{c=1}^{n_{c}}|\widehat{\psi}_{c}\rangle\langle\widehat{\psi}_{c}|\right).
\]
The second term can be written as 
\begin{align*}
\left(1-\sum_{c=1}^{n_{c}}|\widehat{\psi}_{c}\rangle\langle\widehat{\psi}_{c}|\right)\gamma(z)\left(1-\sum_{c=1}^{n_{c}}|\widehat{\psi}_{c}\rangle\langle\widehat{\psi}_{c}|\right) & = M_{0}+zM_{1}+z^{2}M_{2}
\end{align*}
for Hermitian matrices $M_{0},M_{1},M_{2}$ that are independent of $z$. The result follows by applying \thm{kato}. 
\end{proof}

We also use the following two lemmas in the proof of \lem{1}:
\begin{lemma}
\label{lem:deriv}If $e_{i}(x_{0})=0$ for some $x_{0}\in\R$ then 
\[
\frac{de_{i}}{dx}\bigg|_{x_{0}}=\left(\frac{1}{x_{0}}-x_{0}\right)+x_{0}\langle\widehat{v_{i}}(x_{0})|\widehat{P_{n}}|\widehat{v_{i}}(x_{0})\rangle.
\]
\end{lemma}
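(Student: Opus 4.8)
The plan is to apply the Hellmann--Feynman theorem to the holomorphic eigensystem supplied by \lem{diagonalize_gamma}. For real $x$ the vectors $|\widehat{v_i}(x)\rangle$ are normalized and $\gamma(x)$ is Hermitian, so I would start from the scalar identity $e_i(x)=\langle\widehat{v_i}(x)|\gamma(x)|\widehat{v_i}(x)\rangle$, which follows from the eigenvalue equation $\gamma(x)|\widehat{v_i}(x)\rangle=e_i(x)|\widehat{v_i}(x)\rangle$ together with $\langle\widehat{v_i}(x)|\widehat{v_i}(x)\rangle=1$.

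Differentiating this identity and using Hermiticity (so that $\langle\widehat{v_i}|\gamma=e_i\langle\widehat{v_i}|$) together with $\frac{d}{dx}\langle\widehat{v_i}|\widehat{v_i}\rangle=0$, the two terms in which the derivative falls on the eigenvector cancel, leaving $\frac{de_i}{dx}=\langle\widehat{v_i}(x)|\gamma'(x)|\widehat{v_i}(x)\rangle$. The derivative of $\gamma$ is read off directly from \eq{gammahp}: $\gamma'(x)=2x(\widehat{P_n}-1)+\widehat{H}$, so $\frac{de_i}{dx}=2x\langle\widehat{v_i}|\widehat{P_n}|\widehat{v_i}\rangle-2x+\langle\widehat{v_i}|\widehat{H}|\widehat{v_i}\rangle$.

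To eliminate the Hamiltonian expectation value, I would again use \eq{gammahp}, now in the form $e_i=\langle\widehat{v_i}|\gamma(x)|\widehat{v_i}\rangle=x^2\langle\widehat{v_i}|\widehat{P_n}|\widehat{v_i}\rangle-x^2+x\langle\widehat{v_i}|\widehat{H}|\widehat{v_i}\rangle-1$, and solve for $\langle\widehat{v_i}|\widehat{H}|\widehat{v_i}\rangle$. Substituting back gives $\frac{de_i}{dx}=\left(\frac{1}{x}-x\right)+x\langle\widehat{v_i}|\widehat{P_n}|\widehat{v_i}\rangle+\frac{e_i}{x}$, valid for any $x\neq0$. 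Setting $x=x_0$ and invoking the hypothesis $e_i(x_0)=0$ kills the final term and produces the claimed formula. The division by $x$ is legitimate precisely because the hypothesis forces $x_0\neq0$: since $\gamma(0)=-1$, every eigenvalue satisfies $e_i(0)=-1$, so $e_i(x_0)=0$ cannot occur at $x_0=0$.

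The only genuinely delicate point is the Hellmann--Feynman cancellation, since \lem{diagonalize_gamma} gives eigenvectors that are holomorphic in the complex variable $x$ and hence a priori complex-valued, while the bra $\langle\widehat{v_i}(x)|$ is a conjugate transpose. I would emphasize that the whole computation is carried out on the real axis, where the eigenvectors are orthonormal and where real differentiation commutes with complex conjugation, so that $\frac{d}{dx}\langle\widehat{v_i}(x)|$ is the conjugate transpose of $\frac{d}{dx}|\widehat{v_i}(x)\rangle$ and the standard first-order perturbation argument applies verbatim.
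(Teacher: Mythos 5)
Your proposal is correct and follows essentially the same route as the paper: both apply the Hellmann--Feynman theorem to the holomorphic eigensystem of \lem{diagonalize_gamma} and then use the eigenvalue relation to eliminate $\langle\widehat{v_i}|\widehat{H}|\widehat{v_i}\rangle$, with the observation that $\gamma(0)=-1$ forces $x_0\neq 0$. The only (cosmetic) difference is that the paper substitutes the vector identity $\gamma(x_0)|\widehat{v_i}(x_0)\rangle=0$ directly into $\frac{d\gamma}{dx}\big|_{x_0}|\widehat{v_i}(x_0)\rangle$, whereas you work with scalar expectation values at general $x$, obtaining the identity $\frac{de_i}{dx}=\left(\frac{1}{x}-x\right)+x\langle\widehat{v_i}|\widehat{P_n}|\widehat{v_i}\rangle+\frac{e_i}{x}$ before specializing to $e_i(x_0)=0$.
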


\begin{proof}
Note that $\gamma(x_{0})|\widehat{v_{i}}(x_{0})\rangle=0$ implies $x_{0}\neq0$ (since $\gamma(0)=-1)$. Dividing through by $x_{0}$ gives 
\[
\widehat{H}|\widehat{v_{i}}(x_{0})\rangle=\frac{1}{x_{0}}|\widehat{v_{i}}(x_{0})\rangle+x_{0}\left(1-\widehat{P_{n}}\right)|\widehat{v_{i}}(x_{0})\rangle.
\]
We have
\[
\frac{d\gamma}{dx}=\widehat{H} - 2x(1-\widehat{P_n}),
\]
so
\begin{align}
\frac{d\gamma}{dx}\bigg|_{x_{0}}|\widehat{v_{i}}(x_{0})\rangle & = \left[\frac{1}{x_{0}}-x_{0}\left(1-\widehat{P_{n}}\right)\right]|\widehat{v_{i}}(x_{0})\rangle.\label{eq:deriv}
\end{align}
Now 
\begin{align*}
\frac{d}{dx}e_{i}(x) & =\langle\widehat{v_{i}}(x)|\left(\frac{d}{dx}\gamma(x)\right)|\widehat{v_{i}}(x)\rangle+\left(\frac{d}{dx}\langle\widehat{v_{i}}(x)|\right)\gamma(x)|\widehat{v_{i}}(x)\rangle+\langle\widehat{v_{i}}(x)|\gamma(x)\left(\frac{d}{dx}|\widehat{v_{i}}(x)\rangle\right)\\
 & =\langle\widehat{v_{i}}(x)|\left(\frac{d}{dx}\gamma(x)\right)|\widehat{v_{i}}(x)\rangle
\end{align*}
 since $|\widehat{v_{i}}(x)\rangle$ is a normalized eigenvector of $\gamma(x)$ (this is sometimes called the Hellmann-Feynman theorem). Using equation
\eq{deriv}, 
\[
\frac{d}{dx}e_{i}(x)\bigg|_{x_{0}}=\left(\frac{1}{x_{0}}-x_{0}\right)+x_{0}\langle\widehat{v_{i}}(x_{0})|\widehat{P_{n}}|\widehat{v_{i}}(x_{0})\rangle. \qedhere
\]
\end{proof}

\begin{lemma}
~\label{lem:wzero}
\begin{enumerate}[(a)]
\item If $W(z)=0$ and $0\leq|z|<1$ then $\Im(z)=0$.
\item If $W(z)=0$, $|z|=1$, and $\Im(z)\neq 0$, then $\langle \widehat{\psi} |\widehat{P_{n}}|\widehat{\psi} \rangle=0$ for any $|\widehat{\psi}\rangle$ in the null space of $\gamma(z)$.
\end{enumerate}
\end{lemma}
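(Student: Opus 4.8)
The plan is to reduce both parts to a single scalar quadratic equation with real coefficients and then read off the conclusions using Vieta's formulas. Suppose $W(z) = \det(\gamma(z)) = 0$, so that $\gamma(z)$ has a nontrivial null space, and let $|\widehat{\psi}\rangle$ be any normalized vector with $\gamma(z)|\widehat{\psi}\rangle = 0$. First I would use the form $\gamma(z) = z^{2}(\widehat{P_{n}} - 1) + z\widehat{H} - 1$ from equation \eq{gammahp} and take the expectation value in the state $|\widehat{\psi}\rangle$, obtaining
\[
(p - 1)z^{2} + h z - 1 = 0,
\]
where $p \defeq \langle\widehat{\psi}|\widehat{P_{n}}|\widehat{\psi}\rangle$ and $h \defeq \langle\widehat{\psi}|\widehat{H}|\widehat{\psi}\rangle$. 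The crucial observation is that all coefficients here are real: $h$ is real because $\widehat{H}$ is Hermitian, and $p$ is real with $0 \le p \le 1$ because $\widehat{P_{n}}$ is an orthogonal projector, so that $p = \|\widehat{P_{n}}|\widehat{\psi}\rangle\|^{2}$. Thus $z$ is a root of a quadratic with real coefficients.

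For part (a) I would argue by contraposition. If $z$ is a root with $\Im(z) \neq 0$, then $p \neq 1$ (otherwise the equation degenerates to a linear one, forcing $z$ real), so the quadratic is genuine; since its coefficients are real, the conjugate $\overline{z}$ is the second root. By Vieta's formula the product of the two roots equals $(-1)/(p-1) = 1/(1-p)$, hence $|z|^{2} = z\overline{z} = 1/(1-p) \ge 1$ because $0 \le p < 1$. This gives $|z| \ge 1$, so any root with $|z| < 1$ must in fact have $\Im(z) = 0$. (The value $z = 0$ is excluded automatically, since $\gamma(0) = -1$ is invertible, and in any case $\Im(0)=0$.)

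For part (b), with $|z| = 1$ and $\Im(z) \neq 0$, the same setup applies: $p \neq 1$, the conjugate $\overline{z} = 1/z$ is the other root, and the product of the two roots is $|z|^{2} = 1$. Equating this with the Vieta value $1/(1-p)$ yields $1/(1-p) = 1$, i.e.\ $p = \langle\widehat{\psi}|\widehat{P_{n}}|\widehat{\psi}\rangle = 0$. Since $|\widehat{\psi}\rangle$ was an arbitrary null vector of $\gamma(z)$, this holds throughout the null space.

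I do not anticipate a serious obstacle: the whole argument hinges on the single observation that sandwiching $\gamma(z)$ between a null vector produces a quadratic in $z$ with real coefficients, after which both statements follow from the product-of-roots formula together with the bound $0 \le p \le 1$. The only points requiring care are to dispose of the degenerate case $p = 1$ (where the equation is linear with a real root, hence consistent with the claims) and to confirm the inequalities $0 \le p \le 1$ that make the Vieta argument conclusive.
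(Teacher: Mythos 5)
Your proposal is correct and takes essentially the same route as the paper: both reduce the lemma to the scalar identity $\langle\widehat{\psi}|\gamma(z)|\widehat{\psi}\rangle=0$, a quadratic in $z$ with real coefficients $p-1$, $h$, $-1$, exploiting Hermiticity of $\widehat{H}$ and the fact that $\widehat{P_{n}}$ is a projector. The only difference is cosmetic: the paper extracts the conclusions by writing $z=re^{i\phi}$ and taking the imaginary part of the equation, while you use the conjugate-root property and Vieta's product formula, which is an equivalent piece of elementary algebra.
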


\begin{proof}
First consider part (a). It is clearly true when $z=0$, so suppose $\left|z\right|>0$. The hypothesis $W(z)=0$ implies that there exists a normalized state $|\widehat{\psi}\rangle$ such that 
\begin{align*}
z^{-1}\gamma(z)|\widehat{\psi}\rangle & = \left[\widehat{H}-\left(z+\frac{1}{z}\right)+z\widehat{P_{n}}\right]|\widehat{\psi}\rangle
 = 0,
\end{align*}
so
\[
\langle\widehat{\psi}|\left[\widehat{H}-\left(z+\frac{1}{z}\right)+z\widehat{P_{n}}\right]|\widehat{\psi}\rangle=0.
\]
Writing $z=re^{i\phi}$ and taking the imaginary part of the above expression gives
\[
\sin\phi\left[r\langle\widehat{\psi}|\widehat{P_{n}}|\widehat{\psi}\rangle+\left(\frac{1}{r}-r\right)\right]=0.
\]
Since $\langle\widehat{\psi}|\widehat{P_{n}}|\widehat{\psi}\rangle\geq0$, the bracketed expression is strictly positive when $r\in(0,1)$. Hence $\sin\phi=0$, and therefore $\Im(z)=r\sin\phi=0$.

For part (b), if $|z|=1$ but $\Im(z)\neq0$ then the above equation (which holds for any $|\widehat{\psi}\rangle$ in the null space of $\gamma(z)$) says that $\langle\widehat{\psi}|\widehat{P_{n}}|\widehat{\psi}\rangle=0$. 
\end{proof}

We are now ready to give the proof of the main technical lemma. 

\begin{proof}[Proof of \lem{1}]
We can use \lem{diagonalize_gamma} to write 
\begin{equation}
W(x)=\left(\prod_{i=1}^{m+n-n_{c}}e_{i}(x)\right)\prod_{c=1}^{n_c}\left(x\lambda_{c}-x^{2}-1\right)\quad\text{for }x\in\R\label{eq:W_for_real_x}
\end{equation}
where $e_{i}(x)=\langle\widehat{v_{i}}(x)|\gamma(x)|\widehat{v_{i}}(x)\rangle$ are holomorphic functions of $x$ for $x\in\R$. The above expression explicitly separates out the contribution of the confined bound states to the determinant.

We now show that $\alpha_{1}=n_{b}+\dim\Conf_{>}$. Part (a) of \lem{wzero} shows that all of the roots of $W$ with magnitude less than $1$ are real. Consider one such root $x_{0}\in(-1,1)$. If $x_{0}$ is a root of the polynomial 
\[
x\lambda_{c}-x^{2}-1
\]
then the other root is $\frac{1}{x_{0}}\notin(-1,1)$. So each confined bound state with energy $\lambda_{c}=x_{0}+\frac{1}{x_{0}}$ is responsible for a zero of multiplicity one in the polynomial $W(x)$ at $x=x_{0}$, and no other zeros in the interval $(-1,1)$. Now turn to the unconfined bound states. For each $i=1,\ldots,m+n-n_{c}$, using \lem{deriv} we have 
\begin{align*}
\frac{de_{i}}{dx}\bigg|_{x_{0}} & =\left(\frac{1}{x_{0}}-x_{0}\right)+x_{0}\langle\widehat{v_{i}}(x_{0})|\widehat{P_{n}}|\widehat{v_{i}}(x_{0})\rangle\\
 & \neq0
\end{align*}
for $x_{0}\in(-1,1)$ since the right-hand side has the same sign as $x_{0}$ (and $x_{0}\neq0$ because $\gamma(0)=-1$). So each state $|\widehat{v_{i}}(x_{0})\rangle$ such that $e_{i}(x_{0})=0$ contributes a zero of multiplicity one at $x_{0}$ to the polynomial $W(z)$. Therefore, for each $x_{0}\in(-1,1)$ such that $W(x_{0})=0$, 
\begin{align*}
\text{multiplicity of the zero at $x_{0}$} & = \dim\left(\spn\left\{ |\psi\rangle \colon |\psi\rangle\in \Conf_{>}^{\perp} \text{ and } \gamma(x_{0})|\widehat{\psi}\rangle=0\right\} \right)\\
 &\quad +\dim\left(\spn\left\{ |\psi\rangle \colon |\psi\rangle\in \Conf_{>} \text{ and }\gamma(x_{0})|\widehat{\psi}\rangle=0\right\} \right).
\end{align*}
Hence 
\begin{align*}
\alpha_{1} & =\sum_{\substack{x_{0}\in(-1,1) \\ W(x_{0})=0}} \bigg[\dim\left(\spn\left\{ |\psi\rangle \colon |\psi\rangle\in \Conf_{>}^{\perp} \text{ and } \gamma(x_{0})|\widehat{\psi}\rangle=0\right\} \right)\\
 & \qquad+\dim\left(\spn\left\{ |\psi\rangle \colon |\psi\rangle\in \Conf_{>} \text{ and }\gamma(x_{0})|\widehat{\psi}\rangle=0\right\} \right)\bigg]\\
 & =n_{b}+\dim \Conf_{>},
\end{align*}
where in the last line we have used the fact that confined bound states corresponding to different energies are linearly independent (since $D$ is Hermitian).

We now show that $\alpha_{3}=n_{h}+2\dim\Conf_{=}$. To understand the zeros of $W(x)$ for $x\in\{-1,1\}$, we again use equation \eq{W_for_real_x}. Each bound state in $\Conf_=$ (i.e., each confined bound state with energy $\pm2$) contributes a zero of order two located at $x=\pm1$, since \[ \pm 2x-x^{2}-1=-\left(x\mp1\right)^{2}. \] On the other hand, there can also exist states $|\widehat{v_{i}}(\pm1)\rangle$ such that $e_{i}(\pm1)=0$. These are half-bound states and satisfy (by \lem{deriv})
\[ \frac{de_{i}}{dx}\bigg|_{\pm1}=\pm1\langle\widehat{v_{i}}(\pm1)|\widehat{P_{n}}|\widehat{v_{i}}(\pm1)\rangle\neq0,
\]
so each such half-bound state contributes a zero of order 1. Hence
\begin{align*}
  \alpha_{3}
  & =\left|\left\{ i\colon z_{i}\in\left\{ -1,1\right\} \right\} \right|\\ & =\sum_{x=\pm1}\bigg[\dim\left(\spn\left\{ |\psi\rangle \colon |\psi\rangle\in \Conf_{=}^{\perp} \text{ and }\gamma(x)|\widehat{\psi}\rangle=0\right\} \right)\\
  & \qquad +2\dim\left(\spn\left\{ |\psi\rangle \colon |\psi\rangle\in \Conf_{=} \text{ and }\gamma(x)|\widehat{\psi}\rangle=0\right\} \right)\bigg]\\
  & =n_{h}+2\dim \Conf_{=}.
\end{align*}

Finally, we show that $\alpha_{2}=\dim \Conf_{<}$. Now we are concerned with roots of $W(z)$ on the unit circle but not on the real axis. We can no longer use the expression \eq{W_for_real_x}, so we now derive an alternate formula that can be used in this case. Let $\{|\psi_{c}\rangle\}$ be a basis of confined bound states as before. As in the proof of \lem{diagonalize_gamma}, write
\[
\gamma(z)
=\sum_{c=1}^{n_{c}}|\widehat{\psi}_{c}\rangle\langle\widehat{\psi}_{c}|\left(z\lambda-z^{2}-1\right)
+\left(1-\sum_{c=1}^{n_{c}}|\widehat{\psi}_{c}\rangle\langle\widehat{\psi}_{c}|\right)\gamma(z)\left(1-\sum_{c=1}^{n_{c}}|\widehat{\psi}_{c}\rangle\langle\widehat{\psi}_{c}|\right).
\]
From this expression we obtain
\begin{equation}
W(z)=\det(M(z))\prod_{c=1}^{n_c}\left(z\lambda_{c}-z^{2}-1\right)\label{eq:Separate_confined-1}
\end{equation}
where $M(z)$ is some $\left(m+n-n_{c}\right)\times\left(m+n-n_{c}\right)$ matrix. Let $z_{0}$ satisfy $|z_{0}|=1$ and $z_{0}\notin\{-1,1\}.$ From part (b) of \lem{wzero} we see that all the corresponding bound states are confined bound states (i.e., satisfy $\langle\psi|P_{n}|\psi\rangle=0$). This means that $\det(M(z_{0}))\neq0$, so the number of zeros of $W(z)$ at $z_{0}$ is the same as the number of zeros of the polynomial $\prod_{c=1}^{n_c}\left(z\lambda_{c}-z^{2}-1\right)$ at $z_{0}$. Each confined bound state with energy $\lambda_{c}=z_0+\frac{1}{z_{0}}$ contributes a simple zero at $z_{0}$ and a simple zero at $\frac{1}{z_{0}} = z_0^* \ne z_0$, corresponding to the two roots of the polynomial
\[
z\lambda_{c}-z^{2}-1.
\]
So the number of zeros of $W(z)$ in \{$|z|=1$ and $z\notin\{-1,1\}\}$ is twice the number of confined bound states in the subspace $\Conf_{<}$. In other words, 
\begin{align*}
2\alpha_{2} & =\left|\left\{ i\colon\left|z_{i}\right|=1 \text{ and }  z_{i}\notin\{-1,1\}\right\} \right|\\
 & =2 \sum_{\substack{z \colon W(z)=0, \\ |z|=1,~ z \ne \pm 1}} \dim\left(\spn\left\{ |\psi\rangle\colon\text{$\langle\psi|P_{n}|\psi\rangle=0$ and  $\gamma(z)|\widehat{\psi}\rangle=0$}\right\} \right) \\
 & = 2 \dim \Conf_{<}. \qedhere
\end{align*}
\end{proof}

%%%%%%%%%%%%%%%%%%%%%%%%%%%%%%%%%%%%%%%%%%%%%%%%%%%%%%%%%%%%%%%%%%%%%%%%%%%%%%%%
%%%%%%%%%%%%%%%%%%%%%%%%%%%%%%%%%%%%%%%%%%%%%%%%%%%%%%%%%%%%%%%%%%%%%%%%%%%%%%%%
\section{Open questions}

Two directions for future work suggested previously \cite{2011JMP....52h2102C} remain open. The first is to find an algorithmic use for Levinson's theorem in quantum computation. Another direction is to consider the inverse scattering problem, where the goal is to reconstruct as much information as possible about the obstacle $\widehat{G}$ from scattered waves.

%%%%%%%%%%%%%%%%%%%%%%%%%%%%%%%%%%%%%%%%%%%%%%%%%%%%%%%%%%%%%%%%%%%%%%%%%%%%%%%%
%%%%%%%%%%%%%%%%%%%%%%%%%%%%%%%%%%%%%%%%%%%%%%%%%%%%%%%%%%%%%%%%%%%%%%%%%%%%%%%%
\section*{Acknowledgments}

We thank Jeffrey Goldstone for sharing his notes\cite{goldstone} on the completeness of bound and scattering states in the case $n=1$.
This work was supported in part by MITACS, NSERC, the Ontario Ministry of Research and Innovation, and the US ARO/DTO.

%%%%%%%%%%%%%%%%%%%%%%%%%%%%%%%%%%%%%%%%%%%%%%%%%%%%%%%%%%%%%%%%%%%%%%%%%%%%%%%%
%%%%%%%%%%%%%%%%%%%%%%%%%%%%%%%%%%%%%%%%%%%%%%%%%%%%%%%%%%%%%%%%%%%%%%%%%%%%%%%%

\bibliographystyle{hplain}
% \nocite{*}
\bibliography{refs}

\begin{thebibliography}{10}

\bibitem{case:594}
Kenneth~M. Case and Mark Kac.
\newblock A discrete version of the inverse scattering problem.
\newblock {\em Journal of Mathematical Physics}, 14(5):594--603, 1973.

\bibitem{2009PhRvL.102r0501C}
Andrew~M. Childs.
\newblock Universal computation by quantum walk.
\newblock {\em Physical Review Letters}, 102(18):180501, 2009, arXiv:0806.1972.

\bibitem{2002quant.ph..9131C}
Andrew~M. Childs, Richard Cleve, Enrico Deotto, Edward Farhi, Sam Gutmann, and
  Daniel~A. Spielman.
\newblock Exponential algorithmic speedup by quantum walk.
\newblock In {\em Proceedings of the 35th Symposium on Theory of Computing},
  pages 59--68, 2003, arXiv:quant-ph/0209131.

\bibitem{2011JMP....52h2102C}
Andrew~M. Childs and DJ~Strouse.
\newblock Levinson's theorem for graphs.
\newblock {\em Journal of Mathematical Physics}, 52(8):082102, 2011,
  arXiv:1103.5077.

\bibitem{2007quant.ph..2144F}
Edward Farhi, Jeffrey Goldstone, and Sam Gutmann.
\newblock A quantum algorithm for the {H}amiltonian {NAND} tree.
\newblock {\em Theory of Computing}, 4(1):169--190, 2008,
  arXiv:quant-ph/0702144.

\bibitem{1998PhRvA..58..915F}
Edward Farhi and Sam Gutmann.
\newblock Quantum computation and decision trees.
\newblock {\em Physical Review A}, 58:915--928, 1998, arXiv:quant-ph/9706062.

\bibitem{goldstone}
Jeffrey Goldstone.
\newblock Personal communication. Unpublished notes elucidating references to
  ``standard scattering theory'' in Ref.~\onlinecite{2007quant.ph..2144F},
  2008.

\bibitem{hinton:754}
Don~B. Hinton, Martin Klaus, and J.~Ken Shaw.
\newblock Half-bound states and {L}evinson's theorem for discrete systems.
\newblock {\em SIAM Journal on Mathematical Analysis}, 22(3):754--768, 1991.

\bibitem{Kato:1966:PTL}
Tosio Kato.
\newblock {\em Perturbation Theory for Linear Operators}.
\newblock Springer, 1966.

\bibitem{1949PhRv...75.1445L}
Norman Levinson.
\newblock Determination of the potential from the asymptotic phase.
\newblock {\em Physical Review}, 75:1445--1445, 1949.

\bibitem{2009PhRvA..80e2330V}
Martin Varbanov and Todd~A. Brun.
\newblock Quantum scattering theory on graphs with tails.
\newblock {\em Physical Review A}, 80(5):052330, 2009, arXiv:0906.2825.

\end{thebibliography}

\appendix
%%%%%%%%%%%%%%%%%%%%%%%%%%%%%%%%%%%%%%%%%%%%%%%%%%%%%%%%%%%%%%%%%%%%%%%%%%%%%%%%
%%%%%%%%%%%%%%%%%%%%%%%%%%%%%%%%%%%%%%%%%%%%%%%%%%%%%%%%%%%%%%%%%%%%%%%%%%%%%%%%
\section*{Appendix: Completeness of scattering and bound states\label{app:completeness}}

\renewcommand{\theequation}{A.\arabic{equation}}

In this appendix we prove \thm{completeness}, establishing that the scattering states $|\sc_{j}(k)\rangle$ for $j=1,\ldots,n$ along with the unconfined and confined bound states form a complete basis for the Hilbert space. In other words, for any two vertices $v$ and $w$ in the graph,
\[
\langle v|\left(\int_{-\pi}^{0}\frac{dk}{2\pi}|\sc_{j}(k)\rangle\langle\sc_{j}(k)|+\sum_{b=1}^{n_{b}}|\phi_{b}\rangle\langle\phi_{b}|+\sum_{c=1}^{n_{c}}|\psi_{c}\rangle\langle\psi_{c}|\right)|w\rangle=\delta_{vw}.
\]
This is a generalization of the $n=1$ case,\cite{goldstone} and our proof follows the same steps. The proof has three parts:

\begin{enumerate}
\item Since vertices $(r,q)$ and $(s,w)$ with $r,s\geq2$ (i.e., vertices outside the graph on the semi-infinite paths) have no overlap with confined bound states, i.e., 
\[
\langle r,q|\left(\sum_{c=1}^{n_{c}}|\psi_{c}\rangle\langle\psi_{c}|\right)|s,w\rangle=0,
\]
we prove
\begin{equation}
\langle r,q|\left(\int_{-\pi}^{0}\frac{dk}{2\pi}\sum_{j=1}^{n}|\sc_{j}(k)\rangle\langle\sc_{j}(k)|\right)|s,w\rangle=\delta_{rs}\delta_{qw}-\langle r,q|\left(\sum_{b=1}^{n_{b}}|\phi_{b}\rangle\langle\phi_{b}|\right)|s,w\rangle.
\label{eq:part1}
\end{equation}
 
\item Similarly, for $|r,q\rangle$ with $r\geq2$ and $|v\rangle$ with $v$ a vertex in the $(m+n)$-vertex graph $\widehat{G}$, we show that 
\[
\langle r,q|\left(\int_{-\pi}^{0}\frac{dk}{2\pi}\sum_{j=1}^{n}|\sc_{j}(k)\rangle\langle\sc_{j}(k)|\right)|v\rangle=-\langle r,q|\left(\sum_{b=1}^{n_{b}}|\phi_{b}\rangle\langle\phi_{b}|\right)|v\rangle.
\]

\item When $|v\rangle$ and $|w\rangle$ are both basis states corresponding to vertices in $\widehat{G}$, the confined bound states play a role. We show that 
\begin{equation}
\langle w|\left(\int_{-\pi}^{0}\frac{dk}{2\pi}\sum_{j=1}^{n}|\sc_{j}(k)\rangle\langle\sc_{j}(k)|\right)|v\rangle=\delta_{wv}-\langle w|\left(\sum_{b=1}^{n_{b}}|\phi_{b}\rangle\langle\phi_{b}|\right)|v\rangle-\langle w|\left(\sum_{c=1}^{n_{c}}|\psi_{c}\rangle\langle\psi_{c}|\right)|v\rangle.\label{eq:part3}
\end{equation}
\end{enumerate}

%%%%%%%%%%%%%%%%%%%%%%%%%%%%%%%%%%%%%%%%%%%%%%%%%%%%%%%%%%%%%%%%%%%%%%%%%%%%%%%%
\subsection*{Part 1}

Here we consider vertices $(r,q)$ and $(s,w)$ outside the graph, so $r,s\in\{2,3,4,\ldots\}$. For $q=w$, we have
\begin{align*}
&\langle r,q|\left(\int_{-\pi}^{0}\frac{dk}{2\pi}\sum_{j=1}^{n}|\sc_{j}(k)\rangle\langle\sc_{j}(k)|\right)|s,q\rangle \\ &\quad=\int_{-\pi}^{0}\frac{dk}{2\pi}\bigg(\left[e^{-ikr}+e^{ikr}S_{qq}(e^{ik})\right]\left[e^{iks}+e^{-iks}S_{qq}(e^{ik})^{\ast}\right]
+\sum_{j\neq q}e^{ik(r-s)}S_{qj}(e^{ik})S_{qj}(e^{ik})^{\ast}\bigg)\\
&\quad =\int_{-\pi}^{0}\frac{dk}{2\pi}\left(2\cos(k(r-s))+e^{ik(r+s)}S_{qq}(e^{ik})+e^{-ik(r+s)}S_{qq}(e^{ik})^{\ast}\right)
\end{align*}
where we have used the fact that $S(e^{ik})$ is unitary. Now, using the fact that $S(e^{ik})=S(e^{-ik})^{\dagger}$ (which can be seen from equation \eq{sdaggerinv}), we get 
\begin{equation}
\langle r,q|\left(\int_{-\pi}^{0}\frac{dk}{2\pi}\sum_{j=1}^{n}|\sc_{j}(k)\rangle\langle\sc_{j}(k)|\right)|s,q\rangle=\delta_{rs}+\int_{-\pi}^{\pi}\frac{dk}{2\pi}e^{ik(r+s)}S_{qq}(e^{ik}).\label{eq:first_mat_element}
\end{equation}
Now consider the matrix element (for $q\neq w$)
\begin{align*}
&\langle r,q|\left(\int_{-\pi}^{0}\frac{dk}{2\pi}\sum_{j=1}^{n}|\sc_{j}(k)\rangle\langle\sc_{j}(k)|\right)|s,w\rangle \\ &\quad=\int_{-\pi}^{0}\frac{dk}{2\pi}\bigg(\left[e^{-ikr}+e^{ikr}S_{qq}(e^{ik})\right] e^{-iks}S_{wq}(e^{ik})^{\ast}
+ e^{ikr}S_{qw}(e^{ik}) \left[e^{iks}+e^{-iks}S_{ww}(e^{ik})^{\ast}\right] \\
&\qquad
+\sum_{j\notin\{q,w\}}e^{ik(r-s)}S_{qj}(e^{ik})S_{wj}(e^{ik})^{\ast}\bigg)\\
&\quad=\int_{-\pi}^{0}\frac{dk}{2\pi}\left[e^{-ik(r+s)}S_{wq}(e^{ik})^{\ast}+e^{ik(r+s)}S_{qw}(e^{ik})\right]\\
&\quad=\int_{-\pi}^{\pi}\frac{dk}{2\pi}e^{ik(r+s)}S_{qw}(e^{ik}).
\end{align*}
where again we have used the fact that $S(e^{ik})^{\dagger}=S(e^{ik})^{-1}=S(e^{-ik})$. Putting this together with equation \eq{first_mat_element}, we have 
\begin{equation}
\langle r,q|\left(\int_{-\pi}^{0}\frac{dk}{2\pi}\sum_{j=1}^{n}|\sc_{j}(k)\rangle\langle\sc_{j}(k)|\right)|s,w\rangle=\delta_{rs}\delta_{qw}+\int_{-\pi}^{\pi}\frac{dk}{2\pi}e^{ik(r+s)}S_{qw}(e^{ik}).\label{eq:part_1_mat_el}
\end{equation}
We now evaluate the second term on the right-hand side. Letting $z=e^{ik}$, we write this as a contour integral over the unit circle $\Gamma$:
\[
\int_{-\pi}^{\pi}\frac{dk}{2\pi}e^{ik(r+s)}S_{qw}(e^{ik})=\oint_{\Gamma}\frac{dz}{2\pi i}z^{r+s-1}S_{qw}(z).
\]
Since $S(z)$ is unitary on the unit circle, each of its columns and each of its rows is normalized to $1$. So $S_{qw}(z)$ does not have any poles when $z$ is on the unit circle. Furthermore, by \lem{S_is_meromorphic}, $S_{qw}(z)$ is a meromorphic function of $z$ and all of its poles inside the unit circle occur at values $z_{0}$ satisfying either $z_{0}=0$ or $\det(\gamma(z_{0}))=0$. By part (a) of \lem{wzero} this means that all of its poles inside the unit circle lie on the real axis. Using the residue theorem,
\[
\oint_{\Gamma}\frac{dz}{2\pi i}z^{r+s-1}S_{qw}(z)=\sum_{\text{residues }x_{0}\in(-1,1)}\left(\Res_{x_{0}}[x^{r+s-1}S_{qw}(x)]\right)
\]
where $\Res_{x_0}[f(x)]$ denotes the residue of $f(x)$ at $x=x_0$. From \eq{gammahp}, we have 
\[
\gamma(\tfrac{1}{z})=\frac{1}{z^{2}}\gamma(z)+\left(\frac{1}{z^{2}}-1\right)\widehat{P_{n}};
\]
using \eq{S_analytic_continuation}, this implies
\begin{align}
S_{qw}(z) & =\langle\widehat{1,q}|-\gamma(z)^{-1}\gamma(\tfrac{1}{z})|\widehat{1,w}\rangle\nonumber \\
 & =\langle\widehat{1,q}|\left[-\frac{1}{z^{2}}+\left(1-\frac{1}{z^{2}}\right)\gamma(z)^{-1}\widehat{P_{n}}\right]|\widehat{1,w}\rangle\nonumber \\
 & =-\frac{1}{z^{2}}\delta_{qw}+\left(1-\frac{1}{z^{2}}\right)\langle\widehat{1,q}|\gamma(z)^{-1}|\widehat{1,w}\rangle.\label{eq:S_mat_elements}
\end{align}
Thus
\[
\oint_{\Gamma}\frac{dz}{2\pi i}z^{r+s-1}S_{qw}(z)=\sum_{\text{residues }x_{0}\in(-1,1)} \Res_{x_{0}}\left[-x^{r+s-3}\delta_{qw}+x^{r+s-1}\left(1-\frac{1}{x^{2}}\right)\langle\widehat{1,q}|\gamma(x)^{-1}|\widehat{1,w}\rangle\right].
\]
Since we are considering vertices outside the graph, $r\geq2$ and $s\geq2$. Thus the first term has no residues, so 
\[
\oint_{\Gamma}\frac{dz}{2\pi i}z^{r+s-1}S_{qw}(z)=\sum_{\text{residues }x_{0}\in(-1,1)} \Res_{x_{0}}\left[x^{r+s-1}\left(1-\frac{1}{x^{2}}\right)\langle\widehat{1,q}|\gamma(x)^{-1}|\widehat{1,w}\rangle\right].
\]

We can use \lem{diagonalize_gamma} to write 
\begin{equation}
\gamma(x)^{-1}=\sum_{c=1}^{n_c}|\widehat{\psi}_{c}\rangle\langle\widehat{\psi}_{c}|\left(x\lambda_{c}-x^{2}-1\right)^{-1}+\sum_{i=1}^{m+n-n_{c}}\frac{1}{e_{i}(x)}|\widehat{v_{i}}(x)\rangle\langle\widehat{v_{i}}(x)|\text{ for }x\in(-1,1).\label{eq:gamma_inv_on_real_axis}
\end{equation}
The confined bound states $|\psi_{c}\rangle$ satisfy $\langle\psi_{c}|1,j\rangle=0$
for all $j\in\{1,\ldots,n\}$. There is a residue for each state $|\widehat{v_{i}}(x_{0})\rangle$
that satisfies $\gamma(x_{0})|\widehat{v_{i}}(x_{0})\rangle=0$. We use \lem{deriv} to evaluate these residues, giving
\begin{align}
\oint_{\Gamma}\frac{dz}{2\pi i}z^{r+s-1}S_{qw}(z)
& =\sum_{\text{residues }x_{0}\in(-1,1)} \Res_{x_{0}}\!\left[x^{r+s-1}\left(1-\frac{1}{x^{2}}\right)\sum_{i=1}^{m+n-n_{c}}\!\frac{1}{e_{i}(x)}\langle\widehat{1,q}|\widehat{v_{i}}(x)\rangle\langle\widehat{v_{i}}(x)|\widehat{1,w}\rangle\right] \nonumber \\
 & =\sum_{\substack{x_{0}\in(-1,1)\\ W(x_{0})=0}} \, \sum_{i\colon e_{i}(x_{0})=0} x_{0}^{r+s-1}\left(1-\frac{1}{x_{0}^{2}}\right) \frac{\langle\widehat{1,q}|\widehat{v_{i}}(x_{0})\rangle\langle\widehat{v_{i}}(x_{0})|\widehat{1,w}\rangle}{\left.\frac{de_{i}}{dx}\right|_{x_{0}}}\nonumber \\
 & =\sum_{\substack{x_{0}\in(-1,1)\\ W(x_{0})=0}} \, \sum_{i\colon e_{i}(x_{0})=0} x_{0}^{r+s-1}\left(1-\frac{1}{x_{0}^{2}}\right)\frac{\langle\widehat{1,q}|\widehat{v_{i}}(x_{0})\rangle\langle\widehat{v_{i}}(x_{0})|\widehat{1,w}\rangle}{\frac{1}{x_{0}}-x_{0}+x_{0}\langle\widehat{v_{i}}(x_{0})|\widehat{P_{n}}|\widehat{v_{i}}(x_{0})\rangle}.\label{eq:contour_part1}
\end{align}

As described in \sec{eigenstates}, each vector $|\widehat{v_{i}}(x_{0})\rangle$ such that $e_{i}(x_{0})=0$ for some $x_{0}\in(-1,1)$ corresponds to an unconfined bound state $|\phi_{b}\rangle=|\phi_{b(i,x_{0})}\rangle$ and vice versa. This related unconfined bound state agrees up to normalization with $|\widehat{v_{i}}(x_{0})\rangle$ within the graph, so
\[
|\widehat{\phi}_{b(i,x_{0})}\rangle=N_{i,x_{0}}|\widehat{v_{i}}(x_{0})\rangle;
\]
the amplitudes outside the graph are 
\begin{align*}
\langle y,j|\phi_{b(i,x_{0})}\rangle & = N_{i,x_{0}} \langle\widehat{1,j}|\widehat{v_{i}}(x_{0})\rangle x_{0}^{y-1}\text{ for }y\in\{2,3,4,\ldots\}\text{ and }j\in\{1,2,\ldots,n\}.
\end{align*}
From equation \eq{normalizing}, the normalizing constant is 
\begin{align*}
N_{i,x_{0}}
&= \left(\frac{\|\widehat{P_n}|\widehat{v_i}(x_0)\rangle\|^2}{1-x_0^2}
+ \|(1-\widehat{P_n})|\widehat{v_i}(x_0)\rangle\|^2 \right)^{-\frac{1}{2}}\\
&=\left(\frac{1-x_{0}^{2}}{1-x_{0}^{2}+x_{0}^{2}\langle\widehat{v_{i}}(x_{0})|\widehat{P_{n}}|\widehat{v_{i}}(x_{0})\rangle}\right)^{\frac{1}{2}}
\end{align*}
where we have used the fact that $\langle\widehat{v_{i}}(x_{0})|\widehat{v_{i}}(x_{0})\rangle=1$. Substituting in equation \eq{contour_part1}, we obtain
\begin{align*}
\oint_{\Gamma}\frac{dz}{2\pi i}z^{r+s-1}S_{qw}(z) 
& = -\sum_{\substack{x_{0}\in(-1,1)\\ W(x_{0})=0}} \, \sum_{i\colon e_{i}(x_{0})=0}
x_{0}^{r+s-2} N_{i,x_{0}}^{2}\langle\widehat{1,q}|\widehat{v_{i}}(x_{0})\rangle\langle\widehat{v_{i}}(x_{0})|\widehat{1,w}\rangle \\
 & =-\sum_{\substack{x_{0}\in(-1,1)\\ W(x_{0})=0}} \, \sum_{i\colon e_{i}(x_{0})=0}
 x_{0}^{r+s-2} \langle1,q|\phi_{b(i,x_{0})}\rangle\langle\phi_{b(i,x_{0})}|1,w\rangle \\
 & = -\sum_{\substack{x_{0}\in(-1,1)\\ W(x_{0})=0}} \, \sum_{i\colon e_{i}(x_{0})=0} \langle r,q|\phi_{b(i,x_{0})}\rangle\langle\phi_{b(i,x_{0})}|s,w\rangle\\
 & =-\langle r,q|\left(\sum_{b=1}^{n_{b}}|\phi_{b}\rangle\langle\phi_{b}|\right)|s,w\rangle.
\end{align*}
Plugging this into equation \eq{part_1_mat_el} gives equation \eq{part1}, as claimed.

%%%%%%%%%%%%%%%%%%%%%%%%%%%%%%%%%%%%%%%%%%%%%%%%%%%%%%%%%%%%%%%%%%%%%%%%%%%%%%%%
\subsection*{Part 2}

We wish to evaluate 
\[
\int_{-\pi}^{0}\frac{dk}{2\pi}\sum_{j=1}^{n}\langle r,q|\sc_{j}(k)\rangle\langle\widehat{\sc_{j}}(k)|\widehat{v}\rangle
\]
where $|\widehat{v}\rangle$ corresponds to one of the vertices in $\widehat{G}$, and where $r\geq2$.

We have
\begin{align*}
\left(\widehat{H}-2\cos(k)\right)|\widehat{\sc_{j}}(k)\rangle & = -\sum_{l=1}^{n}|\widehat{1,l}\rangle\langle2,l|\sc_{j}(k)\rangle,
\end{align*}
so 
\begin{equation}
\langle\widehat{\sc_{j}}(k)|\widehat{v}\rangle=\sum_{l=1}^{n}\langle\sc_{j}(k)|2,l\rangle\langle\widehat{1,l}|\frac{-1}{\widehat{H}-2\cos(k)}|\widehat{v}\rangle\label{eq:hat_mat_element}
\end{equation}
and 
\begin{align*}
&\int_{-\pi}^{0}\frac{dk}{2\pi}\sum_{j=1}^{n}\langle r,q|\sc_{j}(k)\rangle\langle\widehat{\sc_{j}}(k)|\widehat{v}\rangle \\
&\quad=\int_{-\pi}^{0} \frac{dk}{2\pi}\sum_{j=1}^{n}\sum_{l=1}^{n}\left(e^{-ikr}\delta_{jq}+e^{ikr}S_{qj}(e^{ik})\right)\left(e^{2ik}\delta_{jl}+e^{-2ik}S_{lj}(e^{ik})^{\ast}\right) \langle\widehat{1,l}|\frac{-1}{\widehat{H}-2\cos(k)}|\widehat{v}\rangle \\
&\quad=\int_{-\pi}^{\pi}\frac{dk}{2\pi}\left(e^{ik(r-2)}\langle\widehat{1,q}|\frac{-1}{\widehat{H}-2\cos(k)}|\widehat{v}\rangle+e^{ik(r+2)}\sum_{l=1}^{n}S_{ql}(e^{ik})\langle\widehat{1,l}|\frac{-1}{\widehat{H}-2\cos(k)}|\widehat{v}\rangle\right),
\end{align*}
where we have used unitarity of $S$ as well as the fact that $S(e^{ik})^{\dagger}=S(e^{-ik}).$
We can convert this expression to a contour integral over the unit circle, giving
\begin{align*}
&\int_{-\pi}^{0}\frac{dk}{2\pi}\sum_{j=1}^{n}\langle r,q|\sc_{j}(k)\rangle\langle\widehat{\sc_{j}}(k)|\widehat{v}\rangle \\
&\quad= \oint_{\Gamma}\frac{dz}{2\pi i}\left(z^{r-3}\langle\widehat{1,q}|\frac{-1}{\widehat{H}-z-\frac{1}{z}}|\widehat{v}\rangle+\sum_{l=1}^{n}z^{r+1}S_{ql}(z)\langle\widehat{1,l}|\frac{-1}{\widehat{H}-z-\frac{1}{z}}|\widehat{v}\rangle\right) \\
&\quad=\oint_{\Gamma}\frac{dz}{2\pi i}\bigg[\left(z^{r-3}-z^{r-1}\right)\langle\widehat{1,q}|\frac{-1}{\widehat{H}-z-\frac{1}{z}}|\widehat{v}\rangle +z^{r+1}\left(\frac{1}{z^{2}}-1\right)\langle\widehat{1,q}|\gamma(z)^{-1}\widehat{P_{n}}\frac{1}{\widehat{H}-z-\frac{1}{z}}|\widehat{v}\rangle\bigg]
\end{align*}
where in the second line we have used equation \eq{S_mat_elements}.  Now by \eq{gammahp},
\[
% \gamma(z)=z\left(\widehat{H}-z-\frac{1}{z}\right)+z^{2}\widehat{P_{n}},
  \widehat{P_n} = \frac{1}{z^2} \gamma(z) - \frac{1}{z}\left(\widehat{H} - z - \frac{1}{z}\right),
\]
so 
\begin{equation}
\gamma(z)^{-1}\widehat{P_{n}}=\frac{1}{z^{2}}-\frac{1}{z}\gamma(z)^{-1}\left(\widehat{H}-z-\frac{1}{z}\right).\label{eq:gamma_inv_P}
\end{equation}
Thus we have
\begin{equation}
\int_{-\pi}^{0}\frac{dk}{2\pi}\sum_{j=1}^{n}\langle r,q|\sc_{j}(k)\rangle\langle\widehat{\sc_{j}}(k)|\widehat{v}\rangle=\oint_{\Gamma}\frac{dz}{2\pi i}  z^{r}\left(1-\frac{1}{z^{2}}\right)\langle\widehat{1,q}|\gamma(z)^{-1}|\widehat{v}\rangle.\label{cont_integral}
\end{equation}

We now show that the integrand on the right-hand side has no poles on the unit circle, so that we can use the residue theorem to evaluate the contour integral. To see this, recall that $\gamma(z)$ is block diagonal in an orthonormal basis that includes the confined bound states as basis vectors, where the two blocks correspond to the subspaces $\mathcal{C}$ (the subspace of all confined bound states) and $\mathcal{C}^{\perp}$. Note that because $\langle\psi_c|1,q\rangle=0$,
\begin{align*}
z^{r}\left(1-\frac{1}{z^{2}}\right)\langle\widehat{1,q}|\gamma(z)^{-1}|\widehat{v}\rangle
=z^{r}\left(1-\frac{1}{z^{2}}\right)\langle\widehat{1,q}|\left(1-\sum_{c=1}^{n_{c}}|\widehat{\psi}_{c}\rangle\langle\widehat{\psi}_{c}|\right)\gamma(z)^{-1}\left(1-\sum_{c=1}^{n_{c}}|\widehat{\psi}_{c}\rangle\langle\widehat{\psi}_{c}|\right)|\widehat{v}\rangle
\end{align*}
Part (b) of \lem{wzero} says that if $\gamma(z)$ has determinant $0$ at some $z\in\Gamma\setminus\{\pm1\}$ then its null space is a subspace of $\mathcal{C}$. Hence the $\left(m+n-n_{c}\right)\times\left(m+n-n_{c}\right)$ block of $\gamma(z)$ restricted to $\mathcal{C}^{\perp}$ has nonzero determinant and is therefore invertible on $\Gamma\setminus\{\pm1\}$. Thus the integrand above has no poles on $\Gamma\setminus\{\pm1\}$. At the points $z=\pm1$, it may be that $\det\gamma(\pm1)=0$, but nevertheless
\[
\det\left(\frac{1}{1-z^{2}}\gamma(z)\right)
\]
is nonzero for $z\in\{\pm1\}$. This follows from \lem{deriv}, which implies that an eigenvector in the null space of $\gamma(\pm1)$ has a simple zero in its eigenvalue at $z=\pm1$. So we have shown that the integrand in equation \eqref{cont_integral} has no poles on the unit circle.

The poles of the integrand inside the unit circle occur on the real axis, where we can use the expression \eq{gamma_inv_on_real_axis} and \lem{deriv} to evaluate their residues (as we did in part 1). This gives
\begin{align*}
\int_{-\pi}^{0}\frac{dk}{2\pi}\sum_{j=1}^{n}\langle r,q|\sc_{j}(k)\rangle\langle\widehat{\sc_{j}}(k)|\widehat{v}\rangle & =\sum_{\substack{x_{0}\in(-1,1)\\ W(x_{0})=0}} \, \sum_{i\colon e_{i}(x_{0})=0} x_{0}^{r}\left(1-\frac{1}{x_{0}^{2}}\right) \frac{\langle\widehat{1,q}|\widehat{v_{i}}(x_{0})\rangle\langle\widehat{v_{i}}(x_{0})|\widehat{v}\rangle}{\frac{1}{x_{0}}-x_{0}+x_{0}\langle\widehat{v_{i}}(x_{0})|\widehat{P_{n}}|\widehat{v_{i}}(x_{0})\rangle} \\
 & =-\sum_{\substack{x_{0}\in(-1,1)\\ W(x_{0})=0}} \, \sum_{i\colon e_{i}(x_{0})=0} x_{0}^{r-1} \langle1,q|\phi_{b(i,x_{0})}\rangle\langle\phi_{b(i,x_{0})}|v\rangle \\
 & =-\langle r,q|\left(\sum_{b=1}^{n_{b}}|\phi_{b}\rangle\langle\phi_{b}|\right)|v\rangle
\end{align*}
which completes the proof of part 2.

%%%%%%%%%%%%%%%%%%%%%%%%%%%%%%%%%%%%%%%%%%%%%%%%%%%%%%%%%%%%%%%%%%%%%%%%%%%%%%%%
\subsection*{Part 3}

To prove equation \eq{part3} it is sufficient to show that 
\[
\langle\chi_{a}|\left(\int_{-\pi}^{0}\frac{dk}{2\pi}\sum_{j=1}^{n}|\sc_{j}(k)\rangle\langle\sc_{j}(k)|\right)|\chi_{b}\rangle=\delta_{ab}-\langle\chi_{a}|\left(\sum_{b=1}^{n_{b}}|\phi_{b}\rangle\langle\phi_{b}|\right)|\chi_{b}\rangle-\langle\chi_{a}|\left(\sum_{c=1}^{n_{c}}|\psi_{c}\rangle\langle\psi_{c}|\right)|\chi_{b}\rangle
\]
for any orthonormal basis of $m+n$ states $\{|\chi_{a}\rangle\}$ that are superpositions of the $m+n$ basis states corresponding to vertices in the graph. We choose to work in the orthonormal basis of eigenstates of the matrix $\widehat{H}$, so 
\[
\begin{pmatrix}
A & B^{\dagger}\\
B & D
\end{pmatrix}|\widehat{\chi}_{a}\rangle=E_{a}|\widehat{\chi}_{a}\rangle,
\]
and the state $|\chi_{a}\rangle$ in the extended Hilbert space is simply equal to $|\widehat{\chi}_{a}\rangle$ in the graph and has zero amplitude elsewhere. Therefore
\[
\langle\chi_{a}|\left(\int_{-\pi}^{0}\frac{dk}{2\pi}\sum_{j=1}^{n}|\sc_{j}(k)\rangle\langle\sc_{j}(k)|\right)|\chi_{b}\rangle=\int_{-\pi}^{0}\frac{dk}{2\pi}\sum_{j=1}^{n}\langle\widehat{\chi}_{a}|\widehat{\sc_{j}}(k)\rangle\langle\widehat{\sc_{j}}(k)|\widehat{\chi}_{b}\rangle,
\]
and using equation \eq{hat_mat_element} we get
\begin{align}
\int_{-\pi}^{0}\frac{dk}{2\pi}\sum_{j=1}^{n}\langle\widehat{\chi}_{a}|\widehat{\sc_{j}}(k)\rangle\langle\widehat{\sc_{j}}(k)|\widehat{\chi}_{b}\rangle &=\int_{-\pi}^{0}\frac{dk}{2\pi}\sum_{j,q,l=1}^{n}\bigg[\langle\widehat{\chi}_{a}|\frac{1}{\widehat{H}-2\cos(k)}|\widehat{1,q}\rangle\langle\widehat{1,l}|\frac{1}{\widehat{H}-2\cos(k)}|\widehat{\chi}_{b}\rangle\nonumber \\
&\quad\times\left(e^{-2ik}\delta_{jq}+e^{2ik}S_{qj}(e^{ik})\right)\left(e^{2ik}\delta_{jl}+e^{-2ik}S_{lj}(e^{ik})^{\ast}\right)\bigg]\nonumber \\
&= \int_{-\pi}^{\pi}\frac{dk}{2\pi}\bigg[\langle\widehat{\chi}_{a}|\frac{1}{\widehat{H}-2\cos(k)}\widehat{P_{n}}\frac{1}{\widehat{H}-2\cos(k)}|\widehat{\chi}_{b}\rangle \nonumber \\
&\quad+ e^{4ik} \sum_{q,l=1}^{n}\langle\widehat{\chi}_{a}|\frac{1}{\widehat{H}-2\cos(k)}|\widehat{1,q}\rangle S_{ql}(e^{ik}) \langle\widehat{1,l}|\frac{1}{\widehat{H}-2\cos(k)}|\widehat{\chi}_{b}\rangle \bigg]\nonumber \\
&=\oint_{\Gamma}\frac{dz}{2\pi iz}\bigg[\langle\widehat{\chi}_{a}|\frac{1}{\widehat{H}-z-\frac{1}{z}}\widehat{P_{n}}\frac{1}{\widehat{H}-z-\frac{1}{z}}|\widehat{\chi}_{b}\rangle \nonumber\\
&\quad + z^{4} \sum_{q,l=1}^{n}\langle\widehat{\chi}_{a}|\frac{1}{\widehat{H}-z-\frac{1}{z}}|\widehat{1,q}\rangle S_{ql}(z) \langle\widehat{1,l}|\frac{1}{\widehat{H}-z-\frac{1}{z}}|\widehat{\chi}_{b}\rangle \bigg].\label{eq:contour_part3}
\end{align}
By equation \eq{S_analytic_continuation},
\begin{align*}
S_{ql}(z) & = -\langle\widehat{1,q}|\widehat{P_{n}}\gamma(z)^{-1}\gamma(\tfrac{1}{z})\widehat{P_{n}}|\widehat{1,l}\rangle.
\end{align*}
Similarly to equation \eq{gamma_inv_P}, we have
\[
  \widehat{P_n} \gamma(z)^{-1} = \frac{1}{z^2} - \frac{1}{z}\left(H-z-\frac{1}{z}\right) \gamma(z)^{-1},
\]
and
\begin{align*}
  \gamma(\tfrac{1}{z})
  &= \frac{1}{z} \left(H-z-\frac{1}{z}\right) + \frac{1}{z^2} \widehat{P_n} \\
  &= \frac{1}{z^2} \gamma(z) + \left(\frac{1}{z^2}-1\right) \widehat{P_n},
\end{align*}
so
\begin{align*}
&\widehat{P_{n}}\gamma(z)^{-1}\gamma(\tfrac{1}{z})\widehat{P_{n}} \\ &\quad=\left[\frac{1}{z^{2}}-\frac{1}{z}\left(\widehat{H}-z-\frac{1}{z}\right)\gamma(z)^{-1}\right]\gamma(\tfrac{1}{z})\widehat{P_{n}} \\
 &\quad =\frac{1}{z^{2}}\left[\frac{1}{z}\left(\widehat{H}-z-\frac{1}{z}\right)+\frac{1}{z^{2}}\widehat{P_{n}}\right]\widehat{P_{n}}
 -\frac{1}{z}\left(\widehat{H}-z-\frac{1}{z}\right)\gamma(z)^{-1}\left[\frac{1}{z^{2}}\gamma(z)+\left(\frac{1}{z^{2}}-1\right)\widehat{P_{n}}\right]\widehat{P_{n}}\\
 &\quad =\frac{1}{z^{4}}\widehat{P_{n}}-\frac{1}{z}\left(\frac{1}{z^{2}}-1\right)\left(\widehat{H}-z-\frac{1}{z}\right)\gamma(z)^{-1}\widehat{P_{n}} \\
&\quad=\frac{1}{z^{4}}\widehat{P_{n}}-\frac{1}{z^{3}}\left(\frac{1}{z^{2}}-1\right)\left(\widehat{H}-z-\frac{1}{z}\right)+\frac{1}{z^{2}}\left(\frac{1}{z^{2}}-1\right)\left(\widehat{H}-z-\frac{1}{z}\right)\gamma(z)^{-1}\left(\widehat{H}-z-\frac{1}{z}\right)
\end{align*}
where in the last step we have used equation \eq{gamma_inv_P}. Inserting this into equation \eq{contour_part3} gives
\begin{align}
&\int_{-\pi}^{0}\frac{dk}{2\pi}\sum_{j=1}^{n}\langle\widehat{\chi}_{a}|\widehat{\sc_{j}}(k)\rangle\langle\widehat{\sc_{j}}(k)|\widehat{\chi}_{b}\rangle \nonumber\\
&\quad=\oint_{\Gamma}\frac{dz}{2\pi i}\left[\left(\frac{1}{z^{2}}-1\right)\langle\widehat{\chi}_{a}|\frac{1}{\widehat{H}-z-\frac{1}{z}}|\widehat{\chi}_{b}\rangle-z\left(\frac{1}{z^{2}}-1\right)\langle\widehat{\chi}_{a}|\gamma(z)^{-1}|\widehat{\chi}_{b}\rangle\right].\label{eq:contour_integral_part3}
\end{align}

Without loss of generality we assume the orthonormal basis $\{|\widehat{\chi}_{a}\rangle\}$ includes the confined bound states $\{|\widehat{\psi}_{c}\rangle\}$ since they are eigenstates of $\widehat{H}$. By equation \eq{gammacbs},
\[
\gamma(z)^{-1}|\widehat{\psi}_{c}\rangle=\frac{1}{z}\frac{1}{\lambda_{c}-z-\frac{1}{z}}|\widehat{\psi}_{c}\rangle.
\]
Plugging this into the above, we see that if either $|\widehat{\chi}_{a}\rangle$ or $|\widehat{\chi}_{b}\rangle$ corresponds to a confined bound state then the integrand is zero.

Now consider the case where both $|\widehat{\chi}_{a}\rangle$ and $|\widehat{\chi}_{b}\rangle$ are orthogonal to all confined bound states. Then the first term in equation \eq{contour_integral_part3} is 
\begin{align*}
\oint_{\Gamma}\frac{dz}{2\pi i}\left[\left(\frac{1}{z^{2}}-1\right)\frac{\delta_{ab}}{E_{a}-z-\frac{1}{z}}\right] & =\int_{-\pi}^{\pi}\frac{dk}{2\pi}\frac{-2\sin(k)\delta_{ab}}{E_{a}-2\cos(k)}
 =0
\end{align*}
since the integrand is an odd function of $k$.
For the second term in equation 2, since the states $|\widehat{\chi}_{a}\rangle$ and $|\widehat{\chi}_{b}\rangle$
are orthogonal to all confined bound states, the integrand
\begin{align*}
z\left(1-\frac{1}{z^{2}}\right)\langle\widehat{\chi}_{a}|\gamma(z)^{-1}|\widehat{\chi}_{b}\rangle & = z\left(1-\frac{1}{z^{2}}\right)\langle\widehat{\chi}_{a}|\left(1-\sum_{c=1}^{n_{c}}|\widehat{\psi}_{c}\rangle\langle\widehat{\psi}_{c}|\right)\gamma(z)^{-1}\left(1-\sum_{c=1}^{n_{c}}|\widehat{\psi}_{c}\rangle\langle\widehat{\psi}_{c}|\right)|\widehat{\chi}_{b}\rangle
\end{align*}
has no poles on the unit circle by the same argument given in part
2 for the integrand of equation \eqref{cont_integral}.  As in parts 1 and 2, the second term in equation \eq{contour_integral_part3} has poles inside the unit circle where $\det\gamma(z)=0$. However, unlike in parts 1 and 2, this term also has a pole at $z=0$ that must be taken into account. This pole has residue $-\langle\widehat{\chi}_{a}|\gamma(0)^{-1}|\widehat{\chi}_{b}\rangle=\delta_{ab}$, since $\gamma(0)=-1$. Computing the residues at all of the other poles proceeds as in parts 1 and 2, and we obtain 
\[
\oint_{\Gamma}\frac{dz}{2\pi i}\left[z\left(1-\frac{1}{z^{2}}\right)\langle\widehat{\chi}_{a}|\gamma(z)^{-1}|\widehat{\chi}_{b}\rangle\right]=\delta_{ab}-\langle\chi_{a}|\left(\sum_{b=1}^{n_{b}}|\phi_{b}\rangle\langle\phi_{b}|\right)|\chi_{b}\rangle.
\]
Hence 
\begin{align*}
&\int_{-\pi}^{0}\frac{dk}{2\pi}\sum_{j=1}^{n}\langle\widehat{\chi}_{a}|\widehat{\sc_{j}}(k)\rangle\langle\widehat{\sc_{j}}(k)|\widehat{\chi}_{b}\rangle \\
&\quad=\begin{cases}
0 & \text{if either $|\chi_{a}\rangle$ or $|\chi_{b}\rangle$ is a confined bound state}\\
\delta_{ab}-\langle\chi_{a}|\left(\sum_{b=1}^{n_{b}}|\phi_{b}\rangle\langle\phi_{b}|\right)|\chi_{b}\rangle & \text{if $\langle\widehat{\chi}_{a}|\widehat{\psi}_{c}\rangle=\langle\widehat{\chi}_{b}|\widehat{\psi}_{c}\rangle=0$ for all $c\in\{1,\ldots,n_{c}\}.$}
\end{cases}
\end{align*}
Equivalently,
\[
\langle\chi_{a}|\left[\int_{-\pi}^{0}\frac{dk}{2\pi}\sum_{j=1}^{n}|\sc_{j}(k)\rangle\langle\sc_{j}(k)|\right]|\chi_{b}\rangle=\delta_{ab}-\langle\chi_{a}|\left(\sum_{b=1}^{n_{b}}|\phi_{b}\rangle\langle\phi_{b}|\right)|\chi_{b}\rangle-\langle\chi_{a}|\left(\sum_{c=1}^{n_{c}}|\psi_{c}\rangle\langle\psi_{c}|\right)|\chi_{b}\rangle,
\]
which is what we wanted to prove.
\end{document}